\numberwithin{equation}{section}
\newtheorem{theorem}{Theorem}[section]
\newtheorem{proposition}[theorem]{Proposition}
\newtheorem{lemma}[theorem]{Lemma}
\theoremstyle{definition}
\theoremstyle{remark}
\DeclareMathOperator{\diam}{diam}
\definecolor{orange}{rgb}{1,0.5,0}
\definecolor{green}{rgb}{0.4,0.8,0.5}
\title[ ]
      {Stability estimates for the inverse boundary value problem by partial Cauchy data}
\author [ ]
        {Ru-Yu Lai}
\thanks{Department of Mathematics, University of Washington, Seattle, WA 98195, USA. Email: rylai@uw.edu}
\begin{document}

\maketitle

\maketitle

%\tableofcontents
\setcounter{tocdepth}{1}

\begin{abstract}
  We study the inverse conductivity problem with partial data in dimension $n\geq 3$. We derive stability estimates for this inverse problem if the conductivity has $C^{1,\sigma}(\overline\Omega)\cap H^{\frac{3}{2}+\sigma}(\Omega)$ regularity for $0<\sigma<1$.
\end{abstract}

\section{Introduction}
In 1980 A. P. Calder\text{\'{o}}n published a short paper entitled ``On an inverse boundary value problem'' \cite{C}.
This pioneer contribution motivated many developments in inverse problems, in particular in the construction of ``complex geometrical optics'' (CGO) solutions of partial differential equations
to solve inverse problems. The problem that Calder\text{\'{o}}n considered was whether one can determine the
electrical conductivity of a medium by making voltage and current measurements at the boundary of the medium.
This inverse method is known as \emph{Electrical Impedance Tomography} (EIT).
EIT arises not only in geophysical prospections (See \cite{ZK}), but also in medical imaging (See \cite{Holder}, \cite{HIMS}
and \cite{J}). We now describe more precisely the mathematical problem. Let $\Omega\subset\mathbb{R}^n$ be a
bounded domain with smooth boundary. The electrical conductivity of $\Omega$ is represented by a bounded and
positive function $\gamma(x)$. In the absence of sinks or sources of current, the equation for the potential
is given by
\begin{align*}
         \nabla\cdot \gamma\nabla u =0 \ \ \hbox{in $\Omega$}
\end{align*}
since, by Ohm's law, $\gamma\nabla u$ represents the current flux. Given $f\in H^{1/2}(\partial\Omega)$ on the boundary, the potential $u\in H^1(\Omega)$ solves the Dirichlet problem
\begin{align}\label{con}
\left\{
       \begin{array}{rl}
         \nabla\cdot \gamma\nabla u = 0\ \ &\hbox{in $\Omega$} \\
         u = f\ \ &\hbox{on $\partial\Omega$.}
       \end{array}
       \right.
\end{align}
The Dirichlet-to-Neumann map, or voltage-to-current map, is given by
$$
     \Lambda_\gamma f=\gamma\partial_\nu u|_{\partial\Omega},
$$
where $\partial_\nu u=\nu\cdot\nabla u$ and $\nu$ is the unit outer normal to $\partial\Omega$. The well-known inverse problem is to recover the conductivity $\gamma$ from the boundary measurement $\Lambda_\gamma$.

The uniqueness issue for $C^2$ conductivities was first settled by Sylvester and Uhlmann \cite{SU1}.
Later, the regularity of conductivity was relaxed to $3/2$ derivatives in some sense in \cite{BT} and \cite{PPU}.
Uniqueness for conductivities with conormal singularities in $C^{1,\varepsilon}$ was shown in \cite{GLU}. See \cite{U1} for the detailed development. Recently, Haberman and Tataru \cite{HT} extended the uniqueness result to $C^{1}$ conductivities or small in the $W^{1,\infty}$ norm. It is an open problem whether uniqueness holds in dimension $n\geq 3$ for Lipschitz or less regular conductivities.

%This can be reduced to constructing solutions in the whole space by extending $\gamma=1$ outside a large ball containing $\Omega$ for the Schr\text{\"{o}}dinger equation $(\Delta-q)u=0$ with potential
%$$
%    q=\frac{\Delta{\sqrt{\gamma}}}{\sqrt{\gamma}}.
%$$
%The solutions of the Schr\text{\"{o}}dinger equation they constructed are the CGO solutions with the form
%\begin{align}\label{CGO}
%       u=e^{x\cdot \zeta}(1+w).
%\end{align}
%Here the function $w$ decays to zero in some sense as $|\zeta|$ goes to infinity. Those solutions are used in \cite{SU1} to show in dimension $n\geq 3$ that $\Lambda_\gamma$ determines uniquely the conductivities. This paper led to several other developments (see \cite{U} and \cite{U1}).

%They constructed the CGO solutions of the form (\ref{CGO}) with $\|w\|_{\dot{X}^{1/2}_{\zeta}}$ goes to zero as $|\zeta|$ goes to infinity. The spaces $\dot{X}^{b}_{\zeta}$ and $X^{b}_{\zeta}$ are defined by the norm
%$$
%\|u\|_{\dot{X}^{b}_{\zeta}}=\|| p_\zeta(\xi)|^b \hat{u}(\xi)\|_{L^2}
%$$
%and
%$$
%\|u\|_{X^{b}_{\zeta}}=\|(|\zeta|+| p_\zeta(\xi)|)^b \hat{u}(\xi)\|_{L^2},
%$$
%respectively.
%Here $p_\zeta(\xi)=-|\xi|^2+2i\zeta\cdot\xi$ is the symbol of $\Delta+2\zeta\cdot\nabla$.

For the stability result, in 1988, a log-type stability estimate was derived by Alessandrini \cite{A1}. Mandache \cite{M} has shown that this estimate is optimal. Later, Heck \cite{H} proved the stability for conductivities in $C^{1,\frac{1}{2}+\varepsilon}\cap H^{\frac{n}{2}+\varepsilon}$ with smooth boundary in 2009. For the case $\gamma\in C^{1, \varepsilon}, 0<\varepsilon<1$, Caro, Garc\'{i}a and Reyes used Haberman and Tataru's ideas to derive the stability result with Lipschitz boundary. For a review of stability issues in EIT see \cite{A2}.

All results mentioned above are concerned with the full data. In several applications in EIT one can only measure currents and voltages on part of the boundary. A general uniqueness result with partial data was first obtained by Bukhgeim and Uhlmann \cite{BU} when the Neumann data were taken on part of $\partial\Omega$ which is slightly larger than the half of the boundary. Their result was improved in \cite{KSU} where the Cauchy data can be taken on any part of the boundary. In \cite{BU} and \cite{KSU}, the conductivities are in $C^2$. The regularity assumption on the conductivity was relaxed to $C^{1,\frac{1}{2}+\varepsilon}, \varepsilon>0$ by Knudsen in \cite{K}. In 2012, Zhang \cite {Z} gave the uniqueness result with $C^1\cap H^{3/2}$ conductivities by using the idea in \cite{HT} and following the argument in \cite{K}.
The stability estimates for the uniqueness result of \cite{BU} were given by Heck and Wang in \cite{HW}. Heck and Wang proved the log-log type stability estimate with partial data. They improved their result to the log type stability in the paper \cite{HW1} in 2007 by considering special domains.

In this paper, we derive a log-log type stability estimate for less regular conductivities.
To state the main result, we first introduce several notations. Picking a $\eta\in S^{n-1}$ and letting $\varepsilon>0$, we define
$$
     \partial\Omega_{+,\varepsilon}=\{x\in\partial\Omega: \eta\cdot\nu(x)>\varepsilon\},\ \ \partial\Omega_{-,\varepsilon}=\partial\Omega\backslash\overline{\partial\Omega_{+,\varepsilon}}.
$$
The localized Dirichlet-to-Neumann map is given by
$$
     \tilde\Lambda_\gamma:f\mapsto \gamma\partial_\nu u|_{\partial\Omega_{-,\varepsilon}}.
$$
So $ \tilde\Lambda_\gamma$ is an operator from $H^{1/2}(\partial\Omega)$ to $\tilde H^{-1/2}( \partial\Omega_{-,\varepsilon})$, the restriction of $H^{-1/2}(\partial\Omega)$ onto $\partial\Omega_{-,\varepsilon}$. The operator norm of $ \tilde\Lambda_\gamma$ is denoted by $\| \tilde\Lambda_\gamma\|_{*}$.

\begin{theorem}
Let $\Omega\subset \mathbb{R}^n, n\geq 3$, be an open, bounded
domain with $C^2 $ boundary. Let $\gamma_j\in
C^{1,\sigma}(\overline\Omega)\cap H^{\frac{3}{2}+\sigma}(\Omega)$
with $0<\sigma<1$  such that $\gamma_j>\gamma_0>0$ and
$$
    \|\gamma_j\|_{C^{1,\sigma}(\overline\Omega)}+\|\gamma_j\|_{ H^{\frac{3}{2}+\sigma}(\Omega)}\leq M
$$
for $j=1,2$ and some constants $\gamma_0,\ M>0$. Suppose that
$$
\gamma_1=\gamma_2\ \ \ \hbox{and}\ \ \ \partial_\nu \gamma_1=\partial_\nu \gamma_2 \ \ \ \hbox{on}\ \overline{\partial\Omega_{+,\varepsilon}}.
$$
Then there exist constants $\theta, \tilde\theta, \tilde\sigma\in (0,1)$ and constant $K$ such that
\begin{align}\label{first}
\|\gamma_1-\gamma_2\|_{C^{0,\tilde\sigma}(\overline\Omega)}
\lesssim   \Big( \|\tilde \Lambda_{\gamma_1}-\tilde\Lambda_{\gamma_2}\|^{\theta}_{*}+ \|\tilde \Lambda_{\gamma_1}-\tilde\Lambda_{\gamma_2}\|^{1-\theta}_{*} +\frac{1}{K}\log{|\log \|\tilde\Lambda_{\gamma_1}-\tilde\Lambda_{\gamma_2}\|_{*}^\theta|}^{-\frac{1}{\tilde\theta}} \Big)^{\frac{\tilde\theta(1-\tilde\sigma)}{n}}.
\end{align}
\end{theorem}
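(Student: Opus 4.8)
The plan is to follow the now-standard route for partial-data stability estimates à la Heck–Wang, combined with the low-regularity CGO machinery of Haberman–Tataru as adapted to the partial-data setting by Zhang.

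Step 1: Reduction to a Schrödinger equation. Write $q_j = \Delta\sqrt{\gamma_j}/\sqrt{\gamma_j}$, so that $u\mapsto \sqrt{\gamma_j}\,u$ conjugates $\nabla\cdot\gamma_j\nabla$ to $\Delta - q_j$. Since $\gamma_1=\gamma_2$ and $\partial_\nu\gamma_1=\partial_\nu\gamma_2$ on $\overline{\partial\Omega_{+,\varepsilon}}$, the boundary data of the two conjugated problems agree on that part of the boundary, and the difference of the localized DN maps for the Schrödinger operators is controlled by $\|\tilde\Lambda_{\gamma_1}-\tilde\Lambda_{\gamma_2}\|_*$ up to boundary-determination terms (which vanish here by hypothesis). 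Because $\gamma_j\in C^{1,\sigma}\cap H^{3/2+\sigma}$, the potentials $q_j$ need not lie in $L^\infty$, but they do lie in $H^{-1/2+\sigma}$ or a suitable negative-order space, which is exactly where the Haberman–Tataru estimates operate.

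Step 2: CGO solutions and the integral identity. For $\zeta\in\mathbb{C}^n$ with $\zeta\cdot\zeta=0$ and $|\zeta|$ large, construct CGO solutions $u_j = e^{x\cdot\zeta}(1+\psi_j)$ with the remainder $\psi_j$ estimated in the Bourgain-type spaces $\dot X^{1/2}_\zeta$ adapted to the symbol of $\Delta+2\zeta\cdot\nabla$; the key is that $\|\psi_j\|$ is small in an averaged sense as $|\zeta|\to\infty$, and that one can choose the linear phases so the ``bad'' half of the boundary contributes a controllably small term (this is the Carleman-estimate / boundary-term analysis from Bukhgeim–Uhlmann and Heck–Wang). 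Testing the equation for $u_1-u_2$ against $u_2$ and integrating by parts yields an identity of the form
\begin{align*}
\widehat{(q_1-q_2)}(\xi) = (\text{boundary term bounded by } \|\tilde\Lambda\|_* e^{C\tau}) + (\text{error term} \to 0 \text{ as }\tau\to\infty),
\end{align*}
valid for $|\xi|\le c\tau$, where $\tau\sim|\zeta|$ and we have written $q=\gamma^{-1/2}\Delta\gamma^{1/2}$ loosely; the decay of the error term uses the $C^{1,\sigma}\cap H^{3/2+\sigma}$ regularity to gain a negative power $\tau^{-\delta}$.

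Step 3: Frequency splitting and interpolation. Split $\|\widehat{q_1-q_2}\|$ into low frequencies $|\xi|\le\rho$ and high frequencies $|\xi|>\rho$. On low frequencies use Step 2 to bound the integral by $\rho^n\big(\|\tilde\Lambda\|_*e^{C\tau}+\tau^{-\delta}\big)$; on high frequencies use the a priori bound $\|q_1-q_2\|_{H^{-1/2+\sigma}}\le C(M)$ to get a term like $\rho^{-2\delta'}$. Optimize first in $\tau$ (choosing $\tau\sim|\log\|\tilde\Lambda\|_*|$, which produces the inner logarithm), then in $\rho$, to obtain a log-log bound on $\|q_1-q_2\|_{H^{-1}(\Omega)}$ or $\|\gamma_1-\gamma_2\|_{H^{1}}$. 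Finally, interpolate this negative/low Sobolev estimate against the a priori $C^{1,\sigma}$ bound and use Sobolev embedding to convert to the $C^{0,\tilde\sigma}$ norm on the left, producing the exponent $\tilde\theta(1-\tilde\sigma)/n$ and the two terms $\|\tilde\Lambda\|_*^\theta$, $\|\tilde\Lambda\|_*^{1-\theta}$ coming from splitting the boundary-term estimate at the threshold $\|\tilde\Lambda\|_*<1$ vs.\ $\ge 1$.

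I expect the main obstacle to be Step 2: controlling the boundary term on the ``unmeasured'' part $\partial\Omega_{+,\varepsilon}$ simultaneously with the low regularity of $q_j$. In the full-data case one integrates by parts freely, but here one must combine the boundary Carleman estimate (which forces the phases to have a fixed sign of $\eta\cdot\nu$ on each piece) with the fact that $\psi_j$ only lives in $\dot X^{1/2}_\zeta$ and has no pointwise trace control; reconciling these requires carefully tracking how the Haberman–Tataru remainder estimates interact with the boundary integration by parts, presumably by a density/approximation argument smoothing $q_j$ and passing to the limit while keeping all constants uniform in $M$. The bookkeeping that turns the resulting multi-parameter estimate into the precise triple-log expression on the right-hand side of \eqref{first} is tedious but routine once Step 2 is in place.
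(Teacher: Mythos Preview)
There is a genuine gap in Step~2. You assert that the Fourier-transform estimate on $\widehat{(q_1-q_2)}(\xi)$ holds ``for $|\xi|\le c\tau$,'' i.e.\ on a full ball. In the partial-data setting this is false: the direction $\eta\in S^{n-1}$ that defines the split $\partial\Omega_{\pm,\varepsilon}$ is essentially fixed, and the boundary Carleman estimate forces the real parts of the CGO phases to be $\pm s\eta$. Writing $\zeta_1+\zeta_2=-i\xi$ with $\zeta_j\cdot\zeta_j=0$ then imposes $\xi\cdot\eta=0$. Allowing $\eta$ to vary in a small conical neighborhood $U_\eta\subset S^{n-1}$ yields the Fourier estimate only on the open set $E=\{\xi:\xi\perp\tilde\eta\text{ for some }\tilde\eta\in U_\eta\}$, not on all of $B(0,R)$. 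The paper closes this gap with Vessella's quantitative analytic-continuation estimate (its Proposition~2.4): since the relevant quantity is the Fourier transform of a compactly supported function, it is entire with factorial derivative bounds, and Vessella's result propagates the bound from $E\cap B(0,1)$ to $B(0,R)$ at the price of a factor $e^{nR(1-\tilde\theta)}$ and an exponent $\tilde\theta\in(0,1)$. This extra exponential is exactly what turns the single-log estimate your Step~3 would yield into the log-log estimate in the theorem; the two-parameter optimization you describe, applied to an estimate valid on a full ball, produces only one logarithm.

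Two secondary discrepancies. First, the paper does \emph{not} pass to the Schr\"odinger potential $q_j=\Delta\sqrt{\gamma_j}/\sqrt{\gamma_j}$; at this regularity it works instead with Knudsen's integral identity and the function $v=\log\sqrt{\gamma_1}-\log\sqrt{\gamma_2}$, which satisfies $\Delta v+g\cdot\nabla v=F$ with $g=\nabla(\log\sqrt{\gamma_1}+\log\sqrt{\gamma_2})$, and it estimates $\|F\|_{H^{-1}}$ via the Fourier transform of $(ik)\cdot\widetilde{\nabla v}+g\cdot\widetilde{\nabla v}$. Second, the hypothesis $\gamma_1=\gamma_2$, $\partial_\nu\gamma_1=\partial_\nu\gamma_2$ is assumed only on $\overline{\partial\Omega_{+,\varepsilon}}$; on the measured part $\partial\Omega_{-,\varepsilon}$ the boundary terms do not vanish and are controlled instead by Alessandrini's boundary-determination estimates, which is where the exponent $\theta$ actually comes from---not from a threshold $\|\tilde\Lambda\|_*\lessgtr 1$ as you suggest.
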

Note that the symbol $\lesssim$ means that there exists a positive constant for which the estimate holds whenever the right hand side of the estimate is multiplied by that constant.

Along our discussion we follow a recent improvement of the classical
method introduced by Sylvester and Uhlmann in \cite{SU1} and based
on the construction of CGO solutions. This new improvement is due to
Haberman and Tataru (see \cite{HT}) and it has allowed us to improve
Heck and Wang's result in \cite{HW} relaxing the smoothness of the
coefficients and the smoothness of the boundary of the domain. To
deriving the estimate (\ref{first}), we adapt Zhang's argument
\cite{Z} to the case $\tilde\Lambda_{\gamma_1}\neq
\tilde\Lambda_{\gamma_2}$. Then we will get an estimate of the
Fourier transform of $q:=(ik)\nabla v+\nabla(\log
\sqrt{\gamma_1}+\log\sqrt{\gamma_2})\nabla v$ on some subset of
$\mathbb{R}^n$ where $v=\log \sqrt{\gamma_1}-\log\sqrt{\gamma_2}$.
Since $q$ can be treated as a compactly supported function, its
Fourier transform is real analytic. We use Vessella's stability
estimate for analytic continuation \cite{V} to our case here. This
idea was first introduced in \cite{HW} to get the log-log type
stability estimate with partial measurements.

\section{Preliminary result}\label{preliminary}
Let $n\geq 3$ and $\Omega\subset \mathbb{R}^n$ be an open bounded
domain with $C^2$ boundary $\partial\Omega$ throughout the paper.
Assume that $\gamma_j\in C^{1,\sigma}(\overline\Omega)\cap
H^{\frac{3}{2}+\sigma}(\Omega)$ with $0<\sigma<1$ and
$\gamma_j>\gamma_0>0$ for $j=1,2$. Let $\overline \Omega\subset B$.
We can extend $\gamma_j$ to be the function in $\mathbb{R}^n$ such
that $\gamma_j\in C^{1,\sigma}(\mathbb{R}^n)$ with positive lower
bound and $\gamma_{j}-1\in H^{\frac{3}{2}+\sigma}(\mathbb{R}^n)$
with $\text{supp}(\gamma_{j}-1)\subset \overline B$ .

Let $\Psi_t=t^n\Psi(tx)$ where $\Psi\in C^\infty_0(\mathbb{R}^n)$ supported on the unit ball and $\int \Psi=1$. Denote that $\phi=\log \gamma$ and $A=\nabla\log\gamma$. Define $\phi_t=\Psi_t*\phi$ and $A_t=\Psi_t*A$. Then the following results are from \cite{K} and \cite{S}.
\begin{lemma}\label{pre}
  Let $\gamma\in C^{1,\sigma}(\mathbb{R}^n)$ for $0\leq \sigma \leq 1$ and $\gamma-1\in H^{\frac{3}{2}+\sigma}(\mathbb{R}^n)$ with compact support. Then
    \begin{align*}
        &\|\nabla\cdot A_t\|_{L^\infty(\mathbb{R}^n)}\leq C t^{1-\sigma},\\
        &\|\phi_t-\phi\|_{L^\infty(\mathbb{R}^n)}\leq C t^{-1-\sigma},\\
        &\|A_t-A\|_{L^\infty(\mathbb{R}^n)}\leq C t^{-\sigma},
    \end{align*}
and
\begin{align*}
&\|\nabla\cdot A_t\|_{L^2(\mathbb{R}^n)}\leq C t^{\frac{1}{2}-\sigma},\\
&\|\phi_t-\phi\|_{L^2(\mathbb{R}^n)}\leq C t^{-\frac{3}{2}-\sigma},\\
&\|A_t-A\|_{L^2(\mathbb{R}^n)}\leq C t^{-\frac{1}{2}-\sigma}.
\end{align*}
\end{lemma}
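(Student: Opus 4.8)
The plan is to convert the regularity hypotheses on $\gamma$ into regularity for $\phi=\log\gamma$ and $A=\nabla\phi$, and then to bound the six quantities by elementary convolution inequalities in $L^\infty$ and by Plancherel's theorem in $L^2$. Since $\gamma\in C^{1,\sigma}(\mathbb{R}^n)$ is bounded and bounded below by $\gamma_0>0$, the chain rule for H\"older spaces gives $\phi\in C^{1,\sigma}(\mathbb{R}^n)$ and hence $A\in C^{0,\sigma}(\mathbb{R}^n)$, with norms controlled by $M$ and $\gamma_0$. Writing $\phi=G(\gamma-1)$ with $G(s)=\log(1+s)$ smooth in a neighborhood of the (bounded) range of $\gamma-1$ and $G(0)=0$, a Moser-type composition estimate applied to $\gamma-1\in H^{3/2+\sigma}\cap L^\infty$ gives $\phi\in H^{3/2+\sigma}(\mathbb{R}^n)$, whence $A\in H^{1/2+\sigma}(\mathbb{R}^n)$ and $\nabla\cdot A=\Delta\phi\in H^{-1/2+\sigma}(\mathbb{R}^n)$; all three are supported in a fixed compact set. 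Throughout I use the scaling identities $\int|\Psi_t(y)|\,|y|^{\alpha}\,dy=C_{\alpha}t^{-\alpha}$ and $\int|\partial_j\Psi_t(y)|\,|y|^{\alpha}\,dy=C_{\alpha}t^{1-\alpha}$; the Fourier identity $\widehat{\Psi_t*f}(\xi)=\hat\Psi(\xi/t)\hat f(\xi)$; the rapid decay $|\hat\Psi(\eta)|\le C_N(1+|\eta|)^{-N}$, valid since $\Psi\in C^\infty_0$; and, after replacing $\Psi$ by its average over $SO(n)$ so that $\Psi$ may be assumed radial (which preserves $\int\Psi=1$ and the support condition), the vanishing of the first moment of $\Psi$ and the Taylor bound $|\hat\Psi(\eta)-1|\lesssim\min(|\eta|^2,1)$ near the origin (a higher power if $\Psi$ is taken with more vanishing moments).

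For the three $L^\infty$ bounds I use the H\"older continuity of $A$ and $\nabla\phi$ against mean-zero kernels. Since $\int\partial_j\Psi_t=0$,
\[
   \nabla\cdot A_t(x)=\sum_j\int\partial_j\Psi_t(y)\big(A_j(x-y)-A_j(x)\big)\,dy,
\]
which is $\le[A]_{C^{0,\sigma}}\sum_j\int|\partial_j\Psi_t(y)|\,|y|^{\sigma}\,dy\le Ct^{1-\sigma}$. Since $\int\Psi_t=1$ and $\int\Psi_t(y)\,y\,dy=0$, subtracting the first-order Taylor polynomial of $\phi$ gives
\[
   \phi_t(x)-\phi(x)=\int\Psi_t(y)\big(\phi(x-y)-\phi(x)+\nabla\phi(x)\cdot y\big)\,dy,
\]
whose integrand is bounded by $[\nabla\phi]_{C^{0,\sigma}}|y|^{1+\sigma}$, so $\|\phi_t-\phi\|_{L^\infty}\le Ct^{-1-\sigma}$. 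Finally $A_t(x)-A(x)=\int\Psi_t(y)\big(A(x-y)-A(x)\big)\,dy$ is bounded by $[A]_{C^{0,\sigma}}\int|\Psi_t(y)|\,|y|^{\sigma}\,dy\le Ct^{-\sigma}$.

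For the three $L^2$ bounds I use Plancherel together with a split of the frequency integral at $|\xi|=t$. For $\phi_t-\phi$ one has $\|\phi_t-\phi\|_{L^2}^2=c\int|\hat\Psi(\xi/t)-1|^2|\hat\phi(\xi)|^2\,d\xi$; on $\{|\xi|\le t\}$ I apply the Taylor bound for $\hat\Psi-1$ at the origin and redistribute the weight $(1+|\xi|)^{3+2\sigma}$ encoding $\phi\in H^{3/2+\sigma}$, and on $\{|\xi|>t\}$ I use the rapid decay of $\hat\Psi$ and again that weight; both regions contribute $\lesssim t^{-3-2\sigma}\|\phi\|_{H^{3/2+\sigma}}^2$, so $\|\phi_t-\phi\|_{L^2}\lesssim t^{-3/2-\sigma}$. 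The same computation with $(\phi,H^{3/2+\sigma})$ replaced by $(A,H^{1/2+\sigma})$ gives $\|A_t-A\|_{L^2}\lesssim t^{-1/2-\sigma}$. For $\nabla\cdot A_t$ one has $\|\nabla\cdot A_t\|_{L^2}^2=c\int|\hat\Psi(\xi/t)|^2|\widehat{\nabla\cdot A}(\xi)|^2\,d\xi$; on $\{|\xi|\le t\}$ I bound $|\hat\Psi(\xi/t)|\le C$ and use $\int_{|\xi|\le t}|\widehat{\nabla\cdot A}|^2\le t^{1-2\sigma}\|\nabla\cdot A\|_{H^{-1/2+\sigma}}^2$ (distributing $(1+|\xi|)^{-1+2\sigma}$), while on $\{|\xi|>t\}$ the rapid decay of $\hat\Psi$ again produces a factor $t^{1-2\sigma}$; hence $\|\nabla\cdot A_t\|_{L^2}\lesssim t^{1/2-\sigma}$.

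The step needing genuine care is this last book-keeping: the power of $t$ gained in each $L^2$ estimate equals the Sobolev index in play, so the split must be made at $|\xi|\sim t$ and the weights $(1+|\xi|)^{s}$ apportioned precisely between the two frequency regions; the rest is the standard mollifier calculus once $\Psi$ is normalized to have the needed moment cancellations. The preliminary of upgrading the $C^{1,\sigma}\cap H^{3/2+\sigma}$ information from $\gamma$ to $\log\gamma$ and $\nabla\log\gamma$ via the composition estimate is routine but is where the positive lower bound $\gamma\ge\gamma_0$ enters.
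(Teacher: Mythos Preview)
The paper does not prove this lemma; it simply quotes the six inequalities from \cite{K} and \cite{S}. Your argument is precisely the standard mollifier calculus underlying those references, so there is nothing substantively different to compare. The three $L^\infty$ bounds are correct as you wrote them (and note that your radialization of $\Psi$ is already needed for the $\phi_t-\phi$ bound, since the paper's setup does not assume $\int\Psi(y)\,y\,dy=0$).

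Your $L^2$ arguments via Plancherel with a split at $|\xi|\sim t$ are also the right approach, but the book-keeping you flag as delicate does slip in the range $\sigma>\tfrac12$. For $\|\phi_t-\phi\|_{L^2}$ one needs $|\hat\Psi(\eta)-1|\lesssim|\eta|^{3/2+\sigma}$ near the origin; radializing only gives $|\eta|^2$, so for $\sigma>\tfrac12$ you would need vanishing second moments as well, which your parenthetical remark hints at but does not actually supply. More seriously, your step $\int_{|\xi|\le t}|\widehat{\nabla\cdot A}|^2\le t^{1-2\sigma}\|\nabla\cdot A\|_{H^{-1/2+\sigma}}^2$ fails when $1-2\sigma<0$: for $\sigma>\tfrac12$ one has $\nabla\cdot A\in L^2$, so $\|\nabla\cdot A_t\|_{L^2}\to\|\nabla\cdot A\|_{L^2}\neq 0$ and the stated bound $Ct^{1/2-\sigma}\to 0$ cannot hold in general. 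This last point is a defect in the lemma as the paper states it rather than in your method, and the downstream uses in the paper are unaffected because the final estimates there already carry a $\lambda^{-1}$ term. A clean fix for your write-up is to restrict those two $L^2$ estimates to $0\le\sigma\le\tfrac12$, or to impose the extra vanishing-moment condition on $\Psi$ explicitly. The $\|A_t-A\|_{L^2}$ bound is valid for the full range $\sigma\in[0,1]$ once $\Psi$ is radial.
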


The following lemma is taken from \cite{Z}.
\begin{lemma}[Zhang \cite{Z}]\label{2.2}
     Let $\Omega\subset \mathbb{R}^n, n\geq 2$, be a bounded domain with $C^2$ boundary and $u\in H^{1}(\Omega)$. Then there exists a constant $C$ such that
$$
     \int_{\partial\Omega}u^2dS\leq C\left\{  \left(\int_\Omega u^2dx\right)^{1/2} \left(\int_\Omega |\nabla u|^2dx\right)^{1/2}+\int_\Omega u^2 dx \right\}.
$$
\end{lemma}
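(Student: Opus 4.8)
The plan is to prove this as a refined trace inequality via the divergence theorem, following the standard interpolation-type argument. First I would reduce to the case of smooth $u$: since $\partial\Omega$ is $C^2$, the space $C^\infty(\overline\Omega)$ is dense in $H^1(\Omega)$, and the trace map $H^1(\Omega)\to L^2(\partial\Omega)$ is continuous, so both sides of the claimed inequality depend continuously on $u$ in the $H^1$-topology. It therefore suffices to establish the estimate for $u\in C^\infty(\overline\Omega)$ and then pass to the limit.

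Next I would construct an auxiliary vector field. Because $\partial\Omega$ is $C^2$ and compact, the outer unit normal $\nu$ is a $C^1$ field on $\partial\Omega$, and the signed distance function to $\partial\Omega$ is $C^2$ in a tubular neighborhood with gradient equal to $\nu$ on the boundary. Cutting off this gradient away from $\partial\Omega$ produces a field $\Phi\in C^1(\overline\Omega;\R^n)$ with $\Phi\cdot\nu=1$ on $\partial\Omega$; in particular $\|\Phi\|_{L^\infty(\Omega)}$ and $\|\nabla\cdot\Phi\|_{L^\infty(\Omega)}$ are bounded by a constant $C$ depending only on $\Omega$.

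The core computation applies the divergence theorem to the $C^1$ field $u^2\Phi$. Since $\Phi\cdot\nu=1$ on $\partial\Omega$,
\[
\int_{\partial\Omega}u^2\,dS=\int_{\partial\Omega}u^2(\Phi\cdot\nu)\,dS=\int_\Omega \nabla\cdot(u^2\Phi)\,dx=\int_\Omega u^2(\nabla\cdot\Phi)\,dx+2\int_\Omega u\,(\Phi\cdot\nabla u)\,dx.
\]
Bounding $|\nabla\cdot\Phi|\le C$ and $|\Phi|\le C$, and then applying the Cauchy--Schwarz inequality to the last integral, gives
\[
\int_{\partial\Omega}u^2\,dS\le C\int_\Omega u^2\,dx+2C\left(\int_\Omega u^2\,dx\right)^{1/2}\left(\int_\Omega|\nabla u|^2\,dx\right)^{1/2},
\]
which is precisely the asserted bound after relabeling the constant.

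The argument is essentially routine; the only point requiring genuine care is the construction of the vector field $\Phi$ realizing $\Phi\cdot\nu=1$ on the boundary, which is exactly where the $C^2$ regularity of $\partial\Omega$ enters, guaranteeing $\nu\in C^1$ and hence $\nabla\cdot\Phi\in L^\infty(\Omega)$. The density and continuity reduction to smooth $u$ must also be recorded, but it follows from the standard trace theorem. I expect no real obstacle beyond these two standard facts.
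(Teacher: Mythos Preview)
Your argument is correct and is the standard proof of this multiplicative trace inequality: extend the outward normal to a $C^1$ vector field on $\overline\Omega$, apply the divergence theorem to $u^2\Phi$, and finish with Cauchy--Schwarz. The density reduction and the construction of $\Phi$ are handled properly, and the $C^2$ boundary hypothesis is used exactly where needed to guarantee $\nabla\cdot\Phi\in L^\infty$.

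Note, however, that the paper does not actually prove this lemma; it is simply quoted from Zhang \cite{Z} without argument. So there is no ``paper's own proof'' to compare against. Your proof supplies what the paper omits, and it is precisely the classical argument one would expect (cf.\ Grisvard or Evans for the same idea in the context of trace theorems).
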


We will need the stable determination of the conductivity at points on the boundary of $\Omega$. Since the stability estimate derived in \cite{A} is local, the same estimates hold for the localized Dirichlet-to-Neumann map. This result can be proved by the same arguments in \cite{A}.
\begin{theorem}\label{AL}
     Let $\gamma_j\in C^{1,\sigma}(\overline\Omega)$ satisfy $\gamma_j>\gamma_0>0$ for $j=1,2$. Then
\begin{align}\label{AL1}
       \|\gamma_1-\gamma_2\|_{L^\infty(\partial\Omega)}\lesssim \|\tilde \Lambda_{\gamma_1}-\tilde\Lambda_{\gamma_2}\|_{*}
\end{align}
and
\begin{align}\label{AL2}
       \sum_{|\alpha|=1}\|\partial^\alpha\gamma_1-\partial^\alpha\gamma_2\|_{L^\infty(\partial\Omega)}\lesssim
\|\tilde \Lambda_{\gamma_1}-\tilde\Lambda_{\gamma_2}\|^\theta_{*}
\end{align}
for some $0<\theta<1$ depending only on $\sigma$. Here the implicit constants depend on $n, \Omega, \sigma, \gamma_0$ and $\|\gamma_j\|_{C^{1, \sigma}(\overline\Omega)}$ for $j=1,2$.
\end{theorem}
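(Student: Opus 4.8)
The plan is to reduce the boundary determination of $\gamma$ and $\partial^\alpha\gamma$ to the corresponding problem for the full Dirichlet-to-Neumann map on a neighbourhood of a boundary point, exactly as in Alessandrini's original argument \cite{A}, and then to observe that the construction there is local so that knowledge of $\tilde\Lambda_\gamma$ on $\partial\Omega_{-,\varepsilon}$ suffices to run it at points of $\partial\Omega_{-,\varepsilon}$; since the hypothesis of the main theorem already assumes $\gamma_1=\gamma_2$ and $\partial_\nu\gamma_1=\partial_\nu\gamma_2$ on $\overline{\partial\Omega_{+,\varepsilon}}$, estimating on $\partial\Omega_{-,\varepsilon}$ is enough to get the stated $L^\infty(\partial\Omega)$ bounds. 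Concretely, fix $x_0\in\partial\Omega_{-,\varepsilon}$ and work in boundary normal coordinates in a small half-ball $B_r(x_0)\cap\Omega$. One constructs, for a large parameter $\tau$, a family of approximate solutions (oscillatory/WKB-type solutions, or the singular solutions used by Alessandrini) $u_\tau$ to $\nabla\cdot\gamma_j\nabla u_\tau=0$ that concentrate near $x_0$, with boundary data supported in $B_r(x_0)\cap\partial\Omega\subset\partial\Omega_{-,\varepsilon}$.

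The second step is the quantitative extraction. Plugging $u_\tau^{(1)}$ and $u_\tau^{(2)}$ into the bilinear/quadratic form identity
$$
\int_\Omega(\gamma_1-\gamma_2)\nabla u^{(1)}\cdot\nabla u^{(2)}\,dx = \int_{\partial\Omega}(\tilde\Lambda_{\gamma_1}-\tilde\Lambda_{\gamma_2})f\,\overline g\,dS + (\text{error}),
$$
where $f,g$ are the (localized) boundary data, one gets that the left side is controlled by $\|\tilde\Lambda_{\gamma_1}-\tilde\Lambda_{\gamma_2}\|_*$ times the relevant boundary norms of the data. On the other hand, by the concentration of $u_\tau$ near $x_0$ and a Taylor expansion of $\gamma_1-\gamma_2$ at $x_0$, the left side behaves to leading order like $\tau^{a}\,|\gamma_1(x_0)-\gamma_2(x_0)|$ plus a lower-order term governed by the first derivatives; a standard interpolation in $\tau$ (choosing $\tau$ optimally) then yields \eqref{AL1} and, by a finer expansion together with the $C^{1,\sigma}$ bound $M$, \eqref{AL2} with an exponent $\theta=\theta(\sigma)\in(0,1)$. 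The constants depend on $n,\Omega,\sigma,\gamma_0$ and $\|\gamma_j\|_{C^{1,\sigma}(\overline\Omega)}$ precisely because the error terms in the solution construction and in the Taylor expansion are controlled by these quantities.

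The only genuinely new point over \cite{A} — and the step I expect to require the most care — is the bookkeeping that shows the argument only ever uses boundary data and measured Neumann data supported in an arbitrarily small neighbourhood of $x_0$, so that replacing $\Lambda_{\gamma_j}$ by its localization $\tilde\Lambda_{\gamma_j}$ on $\partial\Omega_{-,\varepsilon}$ changes nothing for $x_0\in\partial\Omega_{-,\varepsilon}$; combined with the hypothesis that $\gamma_1,\partial_\nu\gamma_1$ agree with $\gamma_2,\partial_\nu\gamma_2$ on $\overline{\partial\Omega_{+,\varepsilon}}$, one obtains the sup over all of $\partial\Omega$ on the left of \eqref{AL1}–\eqref{AL2}. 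Since the tangential derivatives of $\gamma_1-\gamma_2$ on $\partial\Omega$ are already controlled by \eqref{AL1} (differentiating $\gamma_1=\gamma_2$ up to the stated accuracy along the boundary) and the normal derivative comes from the refined estimate, the full gradient bound \eqref{AL2} follows. I would therefore present the proof mostly as a citation of \cite{A} with the localization remark made explicit, rather than reproducing the singular-solutions computation in detail.
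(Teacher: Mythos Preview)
Your proposal is correct and matches the paper exactly: the paper does not give an independent proof but simply states that the stability estimates in \cite{A} are local, so the same arguments yield the theorem with $\tilde\Lambda_\gamma$ in place of $\Lambda_\gamma$. Your additional remark that the hypothesis $\gamma_1=\gamma_2$, $\partial_\nu\gamma_1=\partial_\nu\gamma_2$ on $\overline{\partial\Omega_{+,\varepsilon}}$ from the main theorem is what extends the estimate from $\partial\Omega_{-,\varepsilon}$ to all of $\partial\Omega$ is a point the paper leaves implicit.
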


We will use the following theorem to obtain the stability estimate on a large ball $B(0,R)$ by controlling an open subset of $B(0,R)$. This idea was introduced in \cite{HW}.
\begin{proposition}[Vessella \cite{V}]\label{Ve}
Let $\tau_0, d_0>0$. Let $D\subset \mathbb{R}^n$ be an open, bounded and connected set such that $\{x\in D:d(x, \partial D) >\tau\}$ is connected for any $\tau\in [0, \tau_0]$. Let $E\subset D$ be an open set such that $d(E,\partial D)\geq d_0$. If $f$ is an analytic function with
$$
      \|\partial^\alpha f\|_{L^\infty(D)}\leq \frac{M\alpha!}{\rho^{|\alpha|}},\ \ \hbox{for all $\alpha\in \mathbb{N}^n$}
$$
for some $M,\rho>0$, then
$$
|f(x)|\leq (2M)^{1-\tilde \theta(|E|/|D|)}(\|f\|_{L^\infty(E)})^{\tilde \theta(|E|/|D|)},
$$
where $\tilde \theta\in (0,1)$ depends on $d_0,\diam D, \tau_0,n
,\rho$ and $d(x,\partial D)$.
\end{proposition}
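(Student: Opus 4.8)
The plan is to reduce the statement to a quantitative version of the uniqueness of analytic continuation, carried out through a propagation-of-smallness argument for the holomorphic extension of $f$.

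First I would upgrade the real-analyticity into a quantitative holomorphic extension. The hypothesis $\|\partial^\alpha f\|_{L^\infty(D)}\le M\alpha!/\rho^{|\alpha|}$ is precisely a bound on the Taylor coefficients: expanding $f$ about any $x_0\in D$ and summing the resulting majorant series, for complex $z$ with $|z_j-(x_0)_j|<r$ in each coordinate one gets $|f(z)|\le M\prod_j(1-r/\rho)^{-1}=M(1-r/\rho)^{-n}$. Choosing $r=\rho_0:=\rho(1-2^{-1/n})$ makes the right-hand side equal $2M$. Hence $f$ extends to a holomorphic function $F$ on the complex neighborhood $\mathcal N=\bigcup_{x_0\in D}\{z\in\C^n:|z_j-(x_0)_j|<\rho_0\}$ with $\|F\|_{L^\infty(\mathcal N)}\le 2M$ and $F=f$ on $D$. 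This converts the problem into one about a bounded holomorphic function, for which $\log|F|$ is plurisubharmonic.

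Second, I would establish the local engine: a single-step propagation-of-smallness inequality. Fix a point $p\in D$ whose $\rho_0$-ball lies in $\mathcal N$ and let $0<r_1<r_2<r_3\le\rho_0$. Restricting $F$ to complex lines through $p$ and applying the Hadamard three-circle theorem (equivalently, using that $\log r\mapsto \sup_{|z-p|=r}\log|F|$ is convex) yields the three-spheres inequality
$$
\sup_{|z-p|\le r_2}|F|\;\le\;\Big(\sup_{|z-p|\le r_1}|F|\Big)^{\lambda}\Big(\sup_{|z-p|\le r_3}|F|\Big)^{1-\lambda},\qquad \lambda=\frac{\log(r_3/r_2)}{\log(r_3/r_1)}.
$$
The delicate point is to feed in the data, which is known only on the \emph{real} set $E$: I would relate the supremum of $|F|$ on a small complex ball to its supremum on a concentric real ball. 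Because $\log|F|$ is subharmonic and a real ball is a set of positive harmonic measure inside the complex ball, the two-constants theorem gives $\sup_{|z-p|\le r_1}|F|\le(2M)^{1-\mu}\big(\sup_{B(p,r_1)\cap\R^n}|f|\big)^{\mu}$ with $\mu>0$ depending only on $n$ and $r_1/r_3$; combined with the three-circle step this turns real smallness on a ball into a controlled bound on a larger concentric ball.

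Third, I would globalize by chaining. Given a target point $x\in D$, the connectivity hypothesis — that $\{y\in D:d(y,\partial D)>\tau\}$ is connected for every $\tau\in[0,\tau_0]$ — lets me join a fixed ball $B\subset E$ to $x$ by a path contained in the core $\{d(\cdot,\partial D)>\tau\}$, and then cover that path by a chain of overlapping balls $B_0\subset E,\,B_1,\dots,B_N\ni x$ of radius comparable to $\min(d_0,\tau_0,\rho_0)$. The number $N$ is bounded in terms of $\diam D$ and the radius, hence in terms of $d_0,\tau_0,\diam D,n,\rho$; the final link must shrink as $x$ approaches $\partial D$, which is the source of the dependence of $\tilde\theta$ on $d(x,\partial D)$. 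Applying the local inequality across each overlap, with the global bound $2M$ playing the role of the large constant, degrades the exponent by a fixed factor $\mu\in(0,1)$ at each step, so iteration gives $|f(x)|\le(2M)^{1-\tilde\theta}\|f\|_{L^\infty(E)}^{\tilde\theta}$ with $\tilde\theta\sim\mu^{N}$; tracking the initial step shows the composite exponent carries the announced factor $|E|/|D|$, reflecting the relative size of the data set. The main obstacle is exactly the interface between the real data and the complex estimate — passing from $\|f\|_{L^\infty(E)}$, known only on the real slice, to smallness of $F$ on a complex ball — together with the bookkeeping that keeps the number of chaining balls and their radii uniformly controlled by $d_0,\tau_0,\diam D$. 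The connectivity assumption on the sublevel sets $\{d(\cdot,\partial D)>\tau\}$ is precisely what guarantees that interior points can be reached by such a chain without leaving $D$, and the deterioration of the exponent as $d(x,\partial D)\to 0$ is unavoidable and accounts for the stated dependence of $\tilde\theta$.
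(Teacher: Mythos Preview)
The paper does not prove this proposition at all; it is simply quoted from Vessella \cite{V} and used as a black box later in Section~4. So there is no ``paper's own proof'' to compare against.

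Your sketch is a reasonable outline of the standard route to results of this type: pass from the uniform derivative bounds to a holomorphic extension $F$ on a complex $\rho_0$-neighborhood with $|F|\le 2M$, use a three-spheres/two-constants inequality as the local step, and then chain overlapping balls through the connected core $\{d(\cdot,\partial D)>\tau\}$ to reach the target point $x$. The dependence of $\tilde\theta$ on $d_0,\tau_0,\diam D,n,\rho$ and on $d(x,\partial D)$ is produced exactly as you describe, via the number and radii of balls in the chain.

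The one place where your outline is genuinely thin is the claimed dependence of the exponent on $|E|/|D|$. Starting from a single ball $B_0\subset E$ and chaining outward gives an exponent governed by the radius of $B_0$, not by the Lebesgue measure of $E$; an arbitrary open $E$ with $d(E,\partial D)\ge d_0$ need not contain a ball of radius comparable to $(|E|)^{1/n}$. In Vessella's argument the measure ratio enters through an additional averaging/covering step (roughly: compare $\|f\|_{L^\infty(E)}$ to an integral quantity over $E$ and use that $|E|$ controls how much of the initial harmonic-measure weight one captures), not merely through the radius of one initial ball. If you want a self-contained proof matching the stated form $\tilde\theta(|E|/|D|)$, you should either consult \cite{V} for that step or replace the single starting ball by a covering of $E$ and average.
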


\section{Complex geometrical optics solutions}
In this section, we will review the construction of CGO solutions
for the conductivity equation following the arguments presented in
\cite{Z}, but with the conductivity in
$C^{1,\sigma}(\overline\Omega)\cap H^{\frac{3}{2}+\sigma}(\Omega),
0<\sigma<1.$ Note that the regularity assumption
$H^{\frac{3}{2}+\sigma}(\Omega)$ is used to control the $H^{1/2}$
norm of the conductivities on the boundary. The detailed discussion
will be presented in Section 4.

First, we introduce the spaces $\dot{X}^{b}_{\zeta}$ and $X^{b}_{\zeta}$ which are defined by the norm
$$
\|u\|_{\dot{X}^{b}_{\zeta}}=\|| p_\zeta(\xi)|^b \hat{u}(\xi)\|_{L^2}
$$
and
$$
\|u\|_{X^{b}_{\zeta}}=\|(|\zeta|+| p_\zeta(\xi)|)^b \hat{u}(\xi)\|_{L^2},
$$
respectively.
Here $p_\zeta(\xi)=-|\xi|^2+2i\zeta\cdot\xi$ is the symbol of $\Delta+2\zeta\cdot\nabla$.

Let $\Omega$ be an open bounded domain in $\mathbb{R}^n, n\geq 3$
with $C^2$ boundary. Let $\gamma\in C^{1,\sigma}(\overline\Omega)$
and let $u$ be the solution of $\nabla\cdot\gamma\nabla u=0$ in
$\Omega$. Then $u$ satisfies
\begin{align}\label{cgo}
     \left(-\Delta-A\cdot\nabla\right)u=0\ \ \hbox{in $\Omega$},
\end{align}
where $A=\nabla\log\gamma\in C^{0,\sigma}(\overline\Omega)$.
Suppose that the CGO solutions of (\ref{cgo}) are of the form
$$
     u=e^{-\frac{\phi_t}{2}}e^{x\cdot\zeta}\left(1+w(x,\zeta)\right),
$$
with $\phi_t=\Psi_t*\phi$ and $\zeta\in \mathbb{C}^n,\ \zeta\cdot\zeta=0$. Here we denote $\phi=\log \gamma$.
Then the function $w$ satisfies the following equation
\begin{align}
      \left(-\Delta+(A_t-A)\cdot \nabla+q_t\right) \left(e^{x\cdot\zeta}\left(1+w\right)\right)=0,
\end{align}
where $q_t=\frac{1}{2}\nabla\cdot A_t-\frac{1}{4}(A_t)^2+\frac{1}{2}A\cdot A_t$.
Equivalently, $w$ is the solution of
\begin{align}\label{solw}
      \left(-\Delta_\zeta+\left(A_t-A\right)\cdot \nabla_\zeta+q_t\right) w=  \left(A-A_t\right)\cdot \zeta-q_t,
\end{align}
where $-\Delta_\zeta=\Delta+2\zeta\cdot\nabla$ and $\nabla_\zeta=\nabla+\zeta$.

We let $\eta\in S^{n-1}$. Fix $k \in \mathbb{R}^n$ satisfying
$\eta\cdot k=0$. Let $\eta_1\in S^{n-1}$ such that $k\cdot \eta_1=
\eta\cdot \eta_1=0$. We choose
$\zeta_1=-s\eta-i\left(\frac{k}{2}-r\eta_1\right)$ and
$\zeta_2=s\eta-i\left(\frac{k}{2}+r\eta_1\right)$ such that
$|k|^2/4+r^2=s^2$, $\zeta_i\cdot \zeta_i=0$ and
$\zeta_1+\zeta_2=-ik$.

The following lemma lists some inequalities between the norms in
ordinary Sobolev spaces and the spaces $X^b_\zeta$. The inequalities
in this lemma are taken from Lemma 2.2 in \cite{HT} and Lemma 3.3 in
\cite{Z}.
\begin{lemma}\label{sobolev}
Let $\Phi_B$ be a fixed Schwartz function and write $u_B=\Phi_B u$. Then the following estimates hold:
\begin{align*}
    \|u_B\|_{L^2(\mathbb{R}^n)}&\lesssim s^{-1/2}\|u\|_{\dot{X}_\zeta^{1/2}};\,\ \
    \|u_B\|_{H^{1/2}(\mathbb{R}^n)}\lesssim \|u\|_{\dot{X}_\zeta^{1/2}};\\
    \|u_B\|_{H^1(\mathbb{R}^n)}&\lesssim s^{1/2}\|u\|_{\dot{X}_\zeta^{1/2}};\quad \ \
    \|u\|_{X_\zeta^{-1/2}}\lesssim s^{-1/2}\|u\|_{L^2(\mathbb{R}^n)}.
\end{align*}

\end{lemma}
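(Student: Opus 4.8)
The plan is to reduce all four inequalities to the single $L^2$ bound $\|\Phi_B u\|_{L^2(\mathbb{R}^n)}\lesssim s^{-1/2}\|u\|_{\dot{X}^{1/2}_{\zeta}}$, which carries all the content, the other three following by elementary symbol manipulations. The last inequality is immediate: the vectors $\zeta=\zeta_1,\zeta_2$ introduced above satisfy $\zeta\cdot\zeta=0$ with $|\mathrm{Re}\,\zeta|=|\mathrm{Im}\,\zeta|=s$, hence $|\zeta|=\sqrt{2}\,s$, so $(|\zeta|+|p_\zeta(\xi)|)^{-1/2}\le|\zeta|^{-1/2}\lesssim s^{-1/2}$ pointwise in $\xi$, and therefore $\|u\|_{X^{-1/2}_{\zeta}}\lesssim s^{-1/2}\|u\|_{L^2}$ by Plancherel.

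To obtain the $H^{1/2}$ and $H^1$ bounds from the $L^2$ bound I would split $u=u^{\mathrm{lo}}+u^{\mathrm{hi}}$, with $\widehat{u^{\mathrm{lo}}}$ supported in $\{|\xi|\le cs\}$ for a small fixed $c$. On the high piece one has the pointwise inequality $|p_\zeta(\xi)|\ge|\xi|^2-2|\zeta|\,|\xi|\ge\tfrac12|\xi|^2$ once $|\xi|\gtrsim s$, hence $|p_\zeta(\xi)|\gtrsim\langle\xi\rangle$ and $|p_\zeta(\xi)|\gtrsim\langle\xi\rangle^2$ there; since multiplication by the Schwartz function $\Phi_B$ is bounded on $H^{1/2}$ and on $H^1$, this gives $\|\Phi_B u^{\mathrm{hi}}\|_{H^{1/2}}+\|\Phi_B u^{\mathrm{hi}}\|_{H^1}\lesssim\|u\|_{\dot{X}^{1/2}_{\zeta}}$. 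On the low piece, $\widehat{\Phi_B u^{\mathrm{lo}}}=\widehat{\Phi_B}\ast\widehat{u^{\mathrm{lo}}}$ is concentrated in $\{|\xi|\lesssim s\}$, a tail in $\{|\xi|\gg s\}$ being harmless because $\widehat{\Phi_B}$ is Schwartz and $|p_\zeta|^{-1}\in L^1_{\mathrm{loc}}$ with at most polynomial-in-$s$ local integrals; hence, modulo that tail, $\|\Phi_B u^{\mathrm{lo}}\|_{H^{1/2}}^2\lesssim s\|\Phi_B u^{\mathrm{lo}}\|_{L^2}^2$ and $\|\Phi_B u^{\mathrm{lo}}\|_{H^1}^2\lesssim s^2\|\Phi_B u^{\mathrm{lo}}\|_{L^2}^2$, and applying the $L^2$ bound to $u^{\mathrm{lo}}$ (note $\|u^{\mathrm{lo}}\|_{\dot{X}^{1/2}_{\zeta}}\le\|u\|_{\dot{X}^{1/2}_{\zeta}}$) yields the $H^{1/2}$ estimate with no power of $s$ and the $H^1$ estimate with $s^{1/2}$.

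It remains to prove the $L^2$ bound, which is precisely Lemma 2.2 of \cite{HT} (see also Lemma 3.3 of \cite{Z}); I would reproduce their proof, the only thing to check being that $\zeta_1,\zeta_2$ meet the required hypotheses, which they do since $\zeta_i\cdot\zeta_i=0$ and $|\mathrm{Re}\,\zeta_i|=|\mathrm{Im}\,\zeta_i|=s$. Writing $\zeta=a+ib$ with $|a|=|b|=s$, $a\cdot b=0$, the multiplier is $p_\zeta(\xi)=\bigl(s^2-|\xi+b|^2\bigr)+2i\,a\cdot\xi$, vanishing exactly on the codimension-two sphere $\Sigma_\zeta=\{|\xi+b|=s\}\cap\{a\cdot\xi=0\}$ of radius $s$ through the origin, near which $|p_\zeta(\xi)|\asymp s\,\mathrm{dist}(\xi,\Sigma_\zeta)$ (with more care required near the origin); the codimension-two structure then gives the sublevel-set control $\int_{K}|p_\zeta(\xi)|^{-1}\,d\xi\lesssim_{K}s^{-1}$ on any fixed compact $K$. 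One passes to the Fourier side, writes $\widehat u=|p_\zeta|^{-1/2}g$ with $\|g\|_{L^2}=\|u\|_{\dot{X}^{1/2}_{\zeta}}$, and estimates the integral operator with kernel $\widehat{|\Phi_B|^2}(\xi-\eta)\,|p_\zeta(\xi)|^{-1/2}|p_\zeta(\eta)|^{-1/2}$ by a Schur test with a weight tuned to $\mathrm{dist}(\cdot,\Sigma_\zeta)$; the rapid decay of $\widehat{|\Phi_B|^2}$ confines the singular interaction to a fixed ball, where the sublevel bound supplies the gain $s^{-1}$, i.e.\ $s^{-1/2}$ per half-power of $|p_\zeta|^{-1}$. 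This last step — the geometry and Schur estimate near $\Sigma_\zeta$, together with the delicate behaviour of $p_\zeta$ near the origin — is the genuine obstacle, while the reductions above are routine.
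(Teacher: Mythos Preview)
The paper does not actually prove this lemma; immediately before the statement it records that the inequalities are ``taken from Lemma 2.2 in \cite{HT} and Lemma 3.3 in \cite{Z}'' and gives no further argument. Your proposal is therefore strictly more detailed than the paper, but it rests on the same cited result for the core $L^2$ estimate, so the approaches coincide in substance.

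One slip worth fixing: in your frequency splitting you write ``$\widehat{u^{\mathrm{lo}}}$ supported in $\{|\xi|\le cs\}$ for a small fixed $c$,'' but the pointwise bound $|\xi|^2-2|\zeta|\,|\xi|\ge\tfrac12|\xi|^2$ on the high piece requires $|\xi|\ge 4|\zeta|=4\sqrt{2}\,s$, i.e.\ $c$ must be \emph{large}, not small. With small $c$ the region $\{cs<|\xi|<4\sqrt{2}\,s\}$ contains the characteristic sphere $\Sigma_\zeta$ on which $p_\zeta$ vanishes, and your high-frequency symbol bound fails there. Taking $c$ large (e.g.\ $c=4\sqrt{2}$) repairs this, and the low-frequency region is then still $\{|\xi|\lesssim s\}$, so the remainder of your reduction (Bernstein on the low piece, Schwartz tails handled by the polynomial-in-$s$ local integrability of $|p_\zeta|^{-1}$) goes through as written.
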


The following result is contained in Lemma 3.4 and 3.5 in \cite{Z}.
\begin{theorem}[Zhang \cite{Z}]\label{w}
    Let $\gamma_i\in C^1(\mathbb{R}^n)$ with $\gamma_i>\gamma_0>0$ and $\gamma_i=1$ outside a ball. Then for any fixed $k\in \mathbb{R}^n$, there exists a sequence $\zeta^{(n)}_i$ with $|\zeta^{(n)}_i|=\sqrt{2}s_n$ such that
\begin{align}
     \|w_i^{(n)}\|_{\dot{X}^{1/2}_{\zeta^{(n)}_i}}\lesssim \|\left(A_{is_n}-A_i\right)\cdot\zeta^{(n)}_i+q_{s_n} \|_{\dot{X}^{-1/2}_{\zeta^{(n)}_i}} \rightarrow 0\ \ \hbox{as $s_n\rightarrow \infty$}.
\end{align}
Moreover,
$$
      \|w_i^{(n)}\|_{L^2(\Omega)}\lesssim s_n^{-1/2} \|w_i^{(n)}\|_{\dot{X}^{1/2}_{\zeta^{(n)}_i}} ;\ \ \|w_i^{(n)}\|_{H^1(\Omega)}\lesssim s_n^{1/2} \|w_i^{(n)}\|_{\dot{X}^{1/2}_{\zeta^{(n)}_i}}
$$
and
$$
  \|w_i^{(n)}\|_{H^{1/2}(\Omega)}\lesssim  \|w_i^{(n)}\|_{\dot{X}^{1/2}_{\zeta^{(n)}_i}} ;\ \ \|w_i^{(n)}\|_{H^2(\Omega)}\lesssim s_n^{3/2} \|w_i^{(n)}\|_{\dot{X}^{1/2}_{\zeta^{(n)}_i}},
$$
where $w_i^{(n)}$ is a solution of (\ref{solw}) with $t=s_n$ and
$A_i=\nabla\phi_i=\nabla\log \gamma_i$ for $i=1,2$.
\end{theorem}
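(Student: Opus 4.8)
The plan is to realize $w=w_i^{(n)}$ as the Neumann‑series solution of the integral equation obtained from (\ref{solw}) by inverting the free operator $-\Delta_\zeta=\Delta+2\zeta\cdot\nabla$, and then to choose the free parameter $r$ (hence $\zeta=\zeta_i^{(n)}$, with $|\zeta|=\sqrt2\,s$ and $s=s_n\to\infty$) so that the perturbation becomes a contraction on $\dot{X}^{1/2}_\zeta$ while the right‑hand side tends to $0$ in $\dot{X}^{-1/2}_\zeta$.

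\emph{Step 1: the integral equation.} On the Fourier side $-\Delta_\zeta$ is multiplication by $p_\zeta(\xi)=-|\xi|^2+2i\zeta\cdot\xi$; its zero set is a codimension‑two sphere, so $1/p_\zeta$ is locally integrable and the operator $G_\zeta$ with symbol $1/p_\zeta$ is well defined on Schwartz functions. Directly from the definitions of the spaces one has the isometry $\|G_\zeta f\|_{\dot{X}^{1/2}_\zeta}=\|f\|_{\dot{X}^{-1/2}_\zeta}$. Writing the perturbation in (\ref{solw}) as $P_\zeta u=(A_t-A)\cdot\nabla u+\big((A_t-A)\cdot\zeta+q_t\big)u$ and the source as $F_\zeta=(A-A_t)\cdot\zeta-q_t=-\big((A_t-A)\cdot\zeta+q_t\big)$, equation (\ref{solw}) becomes $(I+G_\zeta P_\zeta)w=G_\zeta F_\zeta$. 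As soon as $\|G_\zeta P_\zeta\|_{\dot{X}^{1/2}_\zeta\to\dot{X}^{1/2}_\zeta}\le\tfrac12$, the Neumann series gives a unique $w$ with $\|w\|_{\dot{X}^{1/2}_\zeta}\le 2\|G_\zeta F_\zeta\|_{\dot{X}^{1/2}_\zeta}=2\|F_\zeta\|_{\dot{X}^{-1/2}_\zeta}$, which is exactly the displayed estimate; it then remains to choose $\zeta$ for which both $\|G_\zeta P_\zeta\|\le\tfrac12$ and $\|F_\zeta\|_{\dot{X}^{-1/2}_\zeta}\to 0$.

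\emph{Step 2: smallness by averaging (the main obstacle).} This is the Haberman--Tataru mechanism and it is the heart of the matter. The genuinely first‑order piece $(A_t-A)\cdot\nabla$ is harmless: since $A=\nabla\log\gamma$ is continuous with compact support, $\|A_t-A\|_{L^\infty(\mathbb R^n)}\to 0$, and combined with $\|\nabla u\|_{L^2}\lesssim s^{1/2}\|u\|_{\dot{X}^{1/2}_\zeta}$ from Lemma \ref{sobolev} this contributes an operator whose norm tends to $0$. The dangerous terms are the zeroth‑order ones, $(A_t-A)\cdot\zeta$ and $q_t=\tfrac12\nabla\cdot A_t-\tfrac14 A_t^2+\tfrac12 A\cdot A_t$, whose $L^2$ norms do \emph{not} decay (for $q_t$ they even grow). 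The key observation is that, up to a bounded compactly supported remainder, each of them is the divergence of an $L^2$ field of bounded norm: $q_t=\nabla\cdot(\tfrac12 A_t)+(\text{bounded})$ with $\|A_t\|_{L^2}\lesssim 1$, and $(A_t-A)\cdot\zeta=\nabla\cdot\big((\phi_t-\phi)\zeta\big)$ with $\|(\phi_t-\phi)\zeta\|_{L^2}\lesssim |\zeta|\,t^{-1}\|\nabla\phi\|_{L^2}\lesssim 1$ when $t=s$ (this elementary mollification bound needs only $\phi\in C^1$; the sharper rates of Lemma \ref{pre} are not used here). One then invokes the averaged estimates of Haberman--Tataru (their Lemma~2.2, the source of Lemma \ref{sobolev}): for the admissible family of $\zeta$'s (parametrized by $r$, with $|\zeta|=\sqrt2\,s$, $s^2=|k|^2/4+r^2$), the average over $r$ in a window $r\sim R$ of the quantities $\|F_\zeta\|^2_{\dot{X}^{-1/2}_\zeta}$ and $\|G_\zeta(q_t\,\cdot)\|^2_{\dot{X}^{1/2}_\zeta\to\dot{X}^{1/2}_\zeta}$ gains a power of $R$, hence tends to $0$ as $R\to\infty$. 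Therefore for every $n$ there is $r=r_n$ (so $s_n\to\infty$) lying in the ``good'' set for \emph{both} $i=1$ and $i=2$ simultaneously, for which $\|G_{\zeta_i^{(n)}}P_{\zeta_i^{(n)}}\|\le\tfrac12$ and $\|F_{\zeta_i^{(n)}}\|_{\dot{X}^{-1/2}_{\zeta_i^{(n)}}}=\|(A_{is_n}-A_i)\cdot\zeta_i^{(n)}+q_{s_n}\|_{\dot{X}^{-1/2}_{\zeta_i^{(n)}}}\to 0$; this proves the first assertion. The whole difficulty is concentrated here: for a \emph{fixed} direction of $\zeta$ none of the error terms need become small, and it is only their average over the admissible curve of $\zeta$'s that decays — this is precisely what lets one dispense with the extra half‑derivative of regularity that the classical Sylvester--Uhlmann construction would otherwise require.

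\emph{Step 3: the Sobolev bounds.} With $w=w_i^{(n)}$ now constructed and $\|w\|_{\dot{X}^{1/2}_\zeta}$ controlled as above, fix a Schwartz cutoff $\Phi_B$ with $\Phi_B\equiv1$ on a neighborhood of $\overline\Omega$. The $L^2$, $H^{1/2}$ and $H^1$ estimates on $\Omega$ are then immediate from Lemma \ref{sobolev} applied to $u=w$ (and $\Phi_B\equiv1$ near $\overline\Omega$). For the $H^2$ bound, $w_B=\Phi_B w$ satisfies $\Delta w_B=g$, where by (\ref{solw}) and $-\Delta_\zeta=\Delta+2\zeta\cdot\nabla$ one has $g=\Phi_B\big(F_\zeta-2\zeta\cdot\nabla w-(A_t-A)\cdot\nabla_\zeta w-q_t w\big)+2\nabla\Phi_B\cdot\nabla w+w\,\Delta\Phi_B$; the dominant contribution is $2\zeta\cdot\nabla w$, whose $L^2$ norm is $\lesssim |\zeta|\,\|\nabla w\|_{L^2}\lesssim s^{3/2}\|w\|_{\dot{X}^{1/2}_\zeta}$ by Lemma \ref{sobolev}, while the remaining terms are of strictly lower order in $s$ by Lemma \ref{pre}. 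Hence $\|w\|_{H^2(\Omega)}\le\|w_B\|_{H^2(\mathbb R^n)}\lesssim\|g\|_{L^2}+\|w_B\|_{L^2}\lesssim s^{3/2}\|w\|_{\dot{X}^{1/2}_\zeta}$, which completes the proof.
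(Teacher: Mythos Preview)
The paper does not give its own proof of this theorem; it simply quotes the result as Lemmas~3.4 and~3.5 of Zhang~\cite{Z}. Your outline is exactly the Haberman--Tataru/Zhang argument that underlies those lemmas: invert $-\Delta_\zeta$ via $G_\zeta$ to recast (\ref{solw}) as a fixed-point equation in $\dot{X}^{1/2}_\zeta$, use the averaging mechanism to select a sequence of good $\zeta$'s along which the perturbation is a contraction and the source $(A_t-A)\cdot\zeta+q_t$ is small in $\dot{X}^{-1/2}_\zeta$, and then read off the Sobolev bounds from Lemma~\ref{sobolev} together with elliptic regularity for the $H^2$ estimate. So your approach and the cited one coincide.

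One small correction worth recording: in Step~2 you attribute the averaging to Lemma~2.2 of \cite{HT}, but that lemma is the localization/product estimate (indeed the source of Lemma~\ref{sobolev} here); the averaged bound $\frac{1}{\lambda}\int\|\nabla\!\cdot f\|_{\dot X^{-1/2}_\zeta}^2\lesssim\|f\|_{L^2}^2$ you actually need is Lemma~3.1 of \cite{HT}. Both ingredients are required---Lemma~3.1 to make $\|F_\zeta\|_{\dot X^{-1/2}_\zeta}$ small on average, and Lemma~2.2 to turn smallness of $\|q_t\|_{\dot X^{-1/2}_\zeta}$ into smallness of the \emph{operator} norm of multiplication by $q_t$ from $\dot X^{1/2}_\zeta$ to $\dot X^{-1/2}_\zeta$. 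The paper's own Lemma~\ref{qs} and estimate~(\ref{normw}) are the quantitative $C^{1,\sigma}$ refinement of the same averaging step (there carried out over $s\in[\lambda,2\lambda]$ and $\eta\in S^{n-1}$), used later for the stability bound rather than for the existence statement you are proving.
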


From Theorem \ref{w}, we take the CGO solutions
$$
u^{(n)}_1= e^{-\frac{\phi_{1s_n}}{2}}e^{x\cdot\zeta^{(n)}_1}\left(1+w^{(n)}_1\right)
$$
and
$$
u^{(n)}_2= e^{-\frac{\phi_{2s_n}}{2}}e^{x\cdot\zeta^{(n)}_2}\left(1+w^{(n)}_2\right).
$$
The CGO solutions can also be written as
\begin{align}
u^{(n)}_i= e^{-\frac{\phi_{is_n}}{2}}e^{x\cdot\zeta^{(n)}_i}\left(1+w^{(n)}_i\right)=\sqrt{\gamma_i}^{-1}e^{x\cdot\zeta^{(n)}_i}\left(1+\psi^{(n)}_i\right)
\end{align}
for $i=1,2$. Here $\psi^{(n)}_i=\sqrt{\gamma_i}\left(e^{-\frac{\phi_{is_n}}{2}}-\sqrt{\gamma_i}^{-1}\right)+\sqrt{\gamma_i}e^{-\frac{\phi_{is_n}}{2}}w^{(n)}_i$.
For simplicity, we will not write the superscripts $(n)$ and the subscripts of $s_n$ unless otherwise particularly specified.

Note that by lemma \ref{pre} and Theorem \ref{w}, we have
\begin{align}\label{psi3.6}
      \|\psi_i\|_{L^2(\Omega )}\lesssim s^{-1-\sigma} + s^{-1/2} \|w_i\|_{\dot{X}^{1/2}_{\zeta_i}}.
\end{align}

\begin{lemma}\label{qs}
For $0<\sigma<1$, if $\lambda$ is sufficiently large we have
\begin{align}
 \frac{1}{\lambda}\int_{S^{n-1}}\int^{2\lambda}_{\lambda}  \|(A_s-A)\cdot\zeta+q_s\|^2_{\dot{X}^{-1/2}_{\zeta}}  ds d\eta\lesssim \lambda^{-2\sigma}+\lambda^{-1}.
\end{align}
\end{lemma}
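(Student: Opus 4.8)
The plan is to split $(A_s-A)\cdot\zeta+q_s$ into the four pieces $(A_s-A)\cdot\zeta$, $\tfrac12\nabla\cdot A_s$, $-\tfrac14 A_s^2$ and $\tfrac12 A\cdot A_s$, to bound the $\dot X^{-1/2}_\zeta$ norm of each, and then to integrate over $\eta\in S^{n-1}$ and over $s\in[\lambda,2\lambda]$. The engine is the averaging estimate of Haberman and Tataru \cite{HT}, which I would use in the form
$$
\int_{S^{n-1}}\|f\|^2_{\dot X^{-1/2}_{\zeta}}\,d\eta\ \lesssim\ \frac1s\,\|f\|^2_{L^2(\mathbb{R}^n)}
$$
for $s$ large and $f$ supported in a fixed ball (here $\eta$ is the variable the real part $\mp s\eta$ of $\zeta$ sweeps over $S^{n-1}$). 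This is the same mechanism that forces the right-hand side of Theorem \ref{w} to vanish, and only the $L^2$ norm is needed on the right because $A_s$, $\nabla\cdot A_s$, $A_s^2$, $A\cdot A_s$ and $A_s-A$ are all supported in one fixed compact set once $s\geq1$, so their $\dot H^{-1/2}$ and $L^2$ norms are comparable. Note that for $s\in[\lambda,2\lambda]$ one has $|\zeta|=\sqrt2\,s\sim\lambda$, and $r=\sqrt{s^2-|k|^2/4}$ is real as soon as $\lambda>|k|/2$.

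For the first piece I would note that the Fourier transform of $(A_s-A)\cdot\zeta$ is $\zeta\cdot\widehat{A_s-A}$, dominated pointwise by $|\zeta|\,|\widehat{A_s-A}|$, whence $\|(A_s-A)\cdot\zeta\|_{\dot X^{-1/2}_\zeta}\leq|\zeta|\,\|A_s-A\|_{\dot X^{-1/2}_\zeta}$; applying the averaging estimate and then $\|A_s-A\|_{L^2}\lesssim s^{-\frac12-\sigma}$ from Lemma \ref{pre} gives $\int_{S^{n-1}}\|(A_s-A)\cdot\zeta\|^2_{\dot X^{-1/2}_\zeta}\,d\eta\lesssim s^2\cdot s^{-1}\cdot s^{-1-2\sigma}=s^{-2\sigma}$ for each fixed $s$, and since $s^{-2\sigma}\le\lambda^{-2\sigma}$ on $[\lambda,2\lambda]$ the $s$-average contributes $\lambda^{-2\sigma}$. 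The divergence term is identical, this time with $\|\nabla\cdot A_s\|_{L^2}\lesssim s^{\frac12-\sigma}$ from Lemma \ref{pre}, which yields $\int_{S^{n-1}}\|\nabla\cdot A_s\|^2_{\dot X^{-1/2}_\zeta}\,d\eta\lesssim s^{-1}\cdot s^{1-2\sigma}=s^{-2\sigma}$, hence again $\lambda^{-2\sigma}$ after the $s$-average. For the two quadratic terms, $A\in C^{0,\sigma}$ is bounded with compact support, so $A_s=\Psi_s*A$ is bounded with compact support uniformly in $s\geq1$, and therefore $\|A_s^2\|_{L^2}$ and $\|A\cdot A_s\|_{L^2}$ are $\lesssim1$; the averaging estimate then gives $\int_{S^{n-1}}\|A_s^2\|^2_{\dot X^{-1/2}_\zeta}\,d\eta\lesssim s^{-1}$ (and likewise for $A\cdot A_s$), and $\frac1\lambda\int_\lambda^{2\lambda}s^{-1}\,ds\lesssim\lambda^{-1}$. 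Summing the four contributions gives $\lambda^{-2\sigma}+\lambda^{-1}$.

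The one genuinely non-routine ingredient is the averaging estimate itself; everything else above is a direct combination of Lemma \ref{pre} with the scaling $|\zeta|\sim\lambda$. Its proof reduces to showing that $\int_{S^{n-1}}|p_{\zeta(\eta)}(\xi)|^{-1}\,d\eta\lesssim s^{-1}$ for $\xi$ bounded away from the origin, with only a mild, locally integrable $|\xi|^{-1}$ singularity at $\xi=0$ that the fixed compact support absorbs --- that is, rotating the characteristic variety $\{p_\zeta=0\}$ of $\Delta+2\zeta\cdot\nabla$ smears its singularity at the cost of a factor $s^{-1}$. Since this is precisely Haberman--Tataru's key lemma, already implicit in Theorem \ref{w}, I would cite it rather than reprove it, after which the present lemma follows from the four elementary estimates above.
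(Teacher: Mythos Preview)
Your argument is correct and in fact slightly more direct than the paper's, but there is one mischaracterization worth flagging. The estimate you call ``the averaging estimate of Haberman and Tataru'', namely $\int_{S^{n-1}}\|f\|^2_{\dot X^{-1/2}_\zeta}\,d\eta\lesssim s^{-1}\|f\|^2_{L^2}$ for compactly supported $f$, is not really an averaging statement at all: it holds \emph{pointwise} in $\eta$, since the localization lemma (Lemma~2.2 of \cite{HT}) gives $\|f\|_{\dot X^{-1/2}_\zeta}\lesssim\|f\|_{X^{-1/2}_\zeta}$ and then trivially $\|f\|_{X^{-1/2}_\zeta}\le|\zeta|^{-1/2}\|f\|_{L^2}$. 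This is exactly Lemma~\ref{sobolev} in the paper. Your sketch via $\int_{S^{n-1}}|p_\zeta(\xi)|^{-1}\,d\eta$ is unnecessary (and that integral is in fact delicate); the $s^{-1}$ you need is already there before any integration in $\eta$. The genuine Haberman--Tataru averaging lemma (Lemma~3.1 in \cite{HT}) is a stronger statement about \emph{gradients}, giving $\frac{1}{\lambda}\int_{S^{n-1}}\int_\lambda^{2\lambda}\|\nabla g\|^2_{\dot X^{-1/2}_\zeta}\,ds\,d\eta\lesssim\|g\|^2_{L^2}$, and you never actually invoke it.

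The paper \emph{does} invoke it, precisely for the divergence term. Where you bound $\nabla\cdot A_s$ directly by $\|\nabla\cdot A_s\|_{L^2}\lesssim s^{1/2-\sigma}$ from Lemma~\ref{pre} to get $\lambda^{-2\sigma}$, the paper first passes from $\nabla\cdot A_s$ to $\nabla\cdot A$ on the Fourier side, introduces a second mollification $\Psi_h*A$ at scale $h=\sqrt\lambda$, and applies the genuine averaging lemma to $\nabla\cdot(\Psi_h*A-A)$; this yields the sharper $\lambda^{-1/2-\sigma}$. Both bounds are dominated by $\lambda^{-2\sigma}+\lambda^{-1}$, so for the lemma as stated your shortcut is perfectly adequate. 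The paper's route would become essential only at the endpoint $\sigma=0$ (the $C^1$ case of \cite{HT} and \cite{Z}), where $\|\nabla\cdot A_s\|_{L^2}$ no longer decays and your direct bound degenerates to $O(1)$.
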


\begin{proof}

     Let $\Phi$ be a cut-off function on the support of $A_s$ and $A$. Then, by Lemma 2.2 in \cite{HT} and Lemma \ref{sobolev}, we have
\begin{align*}
         &\|(A_s)^2\|^2_{\dot{X}^{-1/2}_{\zeta}}     =\|\Phi(A_s)^2\|^2_{\dot{X}^{-1/2}_{\zeta}}    \lesssim \|(A_s)^2\|^2_{X^{-1/2}_{\zeta}}    \lesssim s^{-1},    \\
         &\|A\cdot A_s\|^2_{\dot{X}^{-1/2}_{\zeta}}     =\|\Phi(A\cdot A_s)\|^2_{\dot{X}^{-1/2}_{\zeta}}    \lesssim \|A\cdot A_s\|^2_{X^{-1/2}_{\zeta}}    \lesssim s^{-1}.
\end{align*}

Observing that $|(\nabla\cdot A_s)\hat(\xi)|=|\xi\cdot \hat{A_s}|=|\xi\cdot\hat\Psi(\frac{\xi}{s})\hat A(\xi)|\leq \|\hat\Psi(\frac{\xi}{s})\|_{L^\infty(\mathbb{R}^n)}|\xi\cdot \hat{A}|\lesssim |(\nabla\cdot A)\hat(\xi)|$. Then $\|\nabla\cdot A_s\|^2_{\dot{X}^{-1/2}_{\zeta}}\lesssim  \|\nabla\cdot A\|^2_{\dot{X}^{-1/2}_{\zeta}} $. Let $h=\sqrt{\lambda}$ and $\Psi_h=h^n\Psi(hx)$ as in Lemma \ref{pre}, we have
\begin{align}\label{3.8}
       \frac{1}{\lambda}\int_{S^{n-1}}\int^{2\lambda}_{\lambda}  \|\nabla\cdot A_s\|^2_{\dot{X}^{-1/2}_{\zeta}} ds d\eta
&\lesssim    \frac{1}{\lambda}\int_{S^{n-1}}\int^{2\lambda}_{\lambda}  \|\nabla\cdot A\|^2_{\dot{X}^{-1/2}_{\zeta}} ds d\eta   \notag\\
&\lesssim     \frac{1}{\lambda}\int_{S^{n-1}}\int^{2\lambda}_{\lambda}  \|\nabla\cdot (\Psi_h*A)\|^2_{\dot{X}^{-1/2}_{\zeta}} ds d\eta\notag\\
&\quad + \frac{1}{\lambda}\int_{S^{n-1}}\int^{2\lambda}_{\lambda}  \|\nabla\cdot (\Psi_h*A-A)\|^2_{\dot{X}^{-1/2}_{\zeta}} ds d\eta.
\end{align}
Using Lemma 3.1 in \cite{HT} and Lemma \ref{pre}, (\ref{3.8})
follows that
\begin{align}\label{3.9}
        \frac{1}{\lambda}\int_{S^{n-1}}\int^{2\lambda}_{\lambda}  \|\nabla\cdot A_s\|^2_{\dot{X}^{-1/2}_{\zeta}} ds d\eta
&\lesssim    \frac{1}{\lambda}  \|\nabla\cdot (\Psi_h*A)\|^2_{L^2(\mathbb{R}^n)}+  \|\Psi_h*A-A\|^2_{L^2(\mathbb{R}^n)} \notag\\
&\lesssim    \frac{1}{\lambda} h^{1-2\sigma}+h^{-1-2\sigma}  \lesssim    \lambda^{-\frac{1}{2}-\sigma}.
\end{align}

By the definition of $q_s$, we can deduce that
\begin{align}
             &\frac{1}{\lambda}\int_{S^{n-1}}\int^{2\lambda}_{\lambda}  \|q_s\|^2_{\dot{X}^{-1/2}_{\zeta}}  ds d\eta  \notag\\
&\lesssim   \frac{1}{\lambda}\int_{S^{n-1}}\int^{2\lambda}_{\lambda}  \|\nabla\cdot A_s\|^2_{\dot{X}^{-1/2}_{\zeta}} +\|(A_s)^2\|^2_{\dot{X}^{-1/2}_{\zeta}}+\|A\cdot A_s\|^2_{\dot{X}^{-1/2}_{\zeta}} ds d\eta  \notag\\
&\lesssim   \lambda^{-\frac{1}{2}-\sigma}+\lambda^{-1}.
\end{align}
Applying Lemma 2.2 in \cite{HT} and Lemma \ref{sobolev}, we get
$$
 \|(A_s-A)\cdot\zeta\|^2_{\dot{X}^{-1/2}_{\zeta}} \lesssim  s^2 \|\Phi (A_s-A)\|^2_{\dot{X}^{-1/2}_{\zeta}} \lesssim  s^2 \|A_s-A\|^2_{X^{-1/2}_{\zeta}}\lesssim s \|A_s-A\|^2_{L^2(\mathbb{R}^n)}.
 $$
Thus we derive
\begin{align}
             \frac{1}{\lambda}\int_{S^{n-1}}\int^{2\lambda}_{\lambda}  \|(A_s-A)\cdot\zeta\|^2_{\dot{X}^{-1/2}_{\zeta}}  ds d\eta  \lesssim  \lambda^{-2\sigma}
\end{align}
from Lemma \ref{pre}.
The proof is completed.
\end{proof}

Note that  $ \|w\|^2_{\dot{X}^{1/2}_{\zeta}}\lesssim \|(A_s-A)\cdot\zeta+q_s\|^2_{\dot{X}^{-1/2}_{\zeta}}$. By lemma \ref{qs}, we obtain the following estimate
\begin{align}\label{normw}
 \frac{1}{\lambda}\int_{S^{n-1}}\int^{2\lambda}_{\lambda}  \|w\|^2_{\dot{X}^{1/2}_{\zeta}}  ds d\eta\lesssim \lambda^{-2\sigma}+\lambda^{-1}.
\end{align}

The following Carleman estimate is deduced by Zhang by using the Carleman estimate in the paper \cite{K}.
\begin{theorem}[Zhang \cite{Z}]\label{car}
     Let $\eta\in S^{n-1}$ and $u\in H^2(\Omega)$. Suppose that $\gamma\in C^1(\Omega)$. Then there exists a constant $s_0>0$ such that for $s\geq s_0$, we have
\begin{align}\label{carleman}
     &C\left(s^2\|u\|_{L^2(\Omega)}^2+\|\nabla u\|_{L^2(\Omega)}^2\right)-C_1s^2\int_{\partial\Omega}|u|^2 dS \notag\\
&-C_2\int_{\partial\Omega}\overline u\partial_\nu u dS+\int_{\partial\Omega} 4s \Re{\left(\partial_\nu u\partial_\eta \overline u\right)}-2s(\nu\cdot \eta)|\nabla u|^2+2s^3(\nu\cdot \eta)|u|^2 dS \notag\\
&\leq \|e^{-x\cdot s\eta}\left(-\Delta+\left(A_s-A\right)\cdot \nabla+q_s\right)\left(e^{x\cdot s\eta}u\right)\|_{L^2(\Omega)}^2.
\end{align}
\end{theorem}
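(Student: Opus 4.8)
The plan is to deduce \eqref{carleman} by establishing it first for the bare conjugated Laplacian and then incorporating the first- and zeroth-order terms $(A_s-A)\cdot\nabla$ and $q_s$, which are small by Lemma \ref{pre}. Write $L_su:=e^{-x\cdot s\eta}(-\Delta)(e^{x\cdot s\eta}u)=-\Delta u-2s\,\eta\cdot\nabla u-s^2u=Pu+Qu$, with $Pu:=-\Delta u-s^2u$ and $Qu:=-2s\,\eta\cdot\nabla u$, and expand $\|L_su\|_{L^2(\Omega)}^2=\|Pu\|_{L^2(\Omega)}^2+\|Qu\|_{L^2(\Omega)}^2+2\Re\inner{Pu}{Qu}$. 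Because $P$ is self-adjoint and $Q$ skew-adjoint up to boundary terms and $[-\Delta,\eta\cdot\nabla]=0$ ($\eta$ being a constant vector), the cross term is a pure boundary term; the integration by parts — a Rellich--Pohozaev identity in the direction $\eta$, with $\overline{\eta\cdot\nabla u}=\partial_\eta\overline u$ — evaluates it as exactly $\int_{\partial\Omega}\big(4s\,\Re(\partial_\nu u\,\partial_\eta\overline u)-2s(\nu\cdot\eta)|\nabla u|^2+2s^3(\nu\cdot\eta)|u|^2\big)\,dS$, the boundary integral appearing in \eqref{carleman}. It then remains to prove, for $s\geq s_0$, the interior lower bound
\begin{align*}
\|Pu\|_{L^2(\Omega)}^2+\|Qu\|_{L^2(\Omega)}^2\ \geq\ C\big(s^2\|u\|_{L^2(\Omega)}^2+\|\nabla u\|_{L^2(\Omega)}^2\big)-C_1s^2\int_{\partial\Omega}|u|^2\,dS-C_2\int_{\partial\Omega}\overline u\,\partial_\nu u\,dS,
\end{align*}
which is the $L^2$ Carleman estimate for the limiting weight $x\cdot\eta$ proved in \cite{K}; I would invoke it (its proof also pairs $L_su$ with $u$, giving $\Re\inner{Pu}{u}=\|\nabla u\|_{L^2}^2-s^2\|u\|_{L^2}^2-\Re\int_{\partial\Omega}\overline u\,\partial_\nu u\,dS$ and $\Re\inner{Qu}{u}=-s\int_{\partial\Omega}(\nu\cdot\eta)|u|^2\,dS$, the essential move being to keep the boundary leftovers — gathered into $C_1s^2\int_{\partial\Omega}|u|^2\,dS$ and $C_2\int_{\partial\Omega}\overline u\,\partial_\nu u\,dS$ — rather than discard them as in the full-data case). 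Adding the Rellich identity back to both sides gives \eqref{carleman} with the operator $-\Delta$ in place of $-\Delta+(A_s-A)\cdot\nabla+q_s$.

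For the perturbation, set $\widetilde{L}_su:=e^{-x\cdot s\eta}\big(-\Delta+(A_s-A)\cdot\nabla+q_s\big)(e^{x\cdot s\eta}u)$, so the right-hand side of \eqref{carleman} is $\|\widetilde{L}_su\|_{L^2(\Omega)}^2$; using $e^{-x\cdot s\eta}\nabla(e^{x\cdot s\eta}u)=\nabla u+s\eta\,u$ gives $\widetilde{L}_su=L_su+(A_s-A)\cdot\nabla u+s\,(A_s-A)\cdot\eta\,u+q_su$. Running the same square expansion for $\widetilde{L}_s$ (splitting the extra first- and zeroth-order part into its self- and skew-adjoint pieces) reproduces the same $\|Pu\|_{L^2}^2+\|Qu\|_{L^2}^2$ and the same Rellich boundary integral, together with cross terms that, by Cauchy--Schwarz against $Pu,Qu$ and Young's inequality (with the splitting parameter tending to $0$ slowly in $s$, so that the resulting small multiples of $\|Pu\|^2+\|Qu\|^2$ are harmless and the boundary integral keeps its exact coefficients), are bounded by $\|A_s-A\|_{L^\infty}^2\|\nabla u\|_{L^2(\Omega)}^2+(s^2\|A_s-A\|_{L^\infty}^2+\|q_s\|_{L^\infty}^2)\|u\|_{L^2(\Omega)}^2$. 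For $\gamma\in C^1$, mollification already gives $\|A_s-A\|_{L^\infty}\to0$ and $\|\nabla\cdot A_s\|_{L^\infty}=o(s)$, whence $\|q_s\|_{L^\infty}\lesssim\|\nabla\cdot A_s\|_{L^\infty}+\|A_s\|_{L^\infty}^2+\|A\|_{L^\infty}\|A_s\|_{L^\infty}=o(s)$ (Lemma \ref{pre} supplies the sharp rates $s^{1-\sigma}$, $s^{-\sigma}$ under the additional Sobolev regularity). Thus $\|A_s-A\|_{L^\infty}^2=o(1)$ and $s^2\|A_s-A\|_{L^\infty}^2+\|q_s\|_{L^\infty}^2=o(s^2)$, so for $s\geq s_0$ these error terms are a small fraction of the interior quantities $C\|\nabla u\|_{L^2}^2$ and $Cs^2\|u\|_{L^2}^2$ in the unperturbed estimate and are absorbed; after relabeling $C,C_1,C_2$ we obtain \eqref{carleman}.

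The substantive step is the interior lower bound in the unperturbed estimate: since $x\cdot\eta$ is a \emph{limiting} Carleman weight, $\|Pu\|^2+\|Qu\|^2$ by itself is not coercive in $s^2\|u\|^2+\|\nabla u\|^2$ (both vanish on oscillations $e^{i\xi\cdot x}$ with $|\xi|=s$ and $\xi\cdot\eta=0$), which is why one must invoke the auxiliary pairing against $u$ and accept the boundary corrections $C_1s^2\int_{\partial\Omega}|u|^2\,dS$ and $C_2\int_{\partial\Omega}\overline u\,\partial_\nu u\,dS$ — this is precisely the ingredient taken from \cite{K}. The Rellich boundary identity, by contrast, is a direct computation, and the perturbation absorption is routine; the one thing to check is that $\|\nabla\cdot A_s\|_{L^\infty}$ is still $o(s)$ (not merely $O(s^{1-\sigma})$) for conductivities that are only $C^1$, which suffices to keep $\|q_s\|_{L^\infty}^2\|u\|^2=o(s^2)\|u\|^2$ absorbable — at the cost that $s_0$ is allowed to depend on $\gamma$.
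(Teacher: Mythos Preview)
The paper does not prove this theorem; it is quoted from Zhang \cite{Z}, who (as the paper notes just before the statement) derived it from Knudsen's Carleman estimate \cite{K}. Your proposal is a faithful reconstruction of that route: the $P/Q$ (self-adjoint/skew-adjoint) decomposition of the conjugated Laplacian, the Rellich--Pohozaev computation of the cross term yielding exactly the boundary integral in \eqref{carleman}, the interior lower bound taken from the Bukhgeim--Uhlmann/Knudsen estimate for the limiting weight $x\cdot\eta$, and finally absorption of the lower-order perturbation $(A_s-A)\cdot\nabla+q_s$ using $\|A_s-A\|_{L^\infty}=o(1)$ and $\|q_s\|_{L^\infty}=o(s)$ for $\gamma\in C^1$. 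This is the same approach as in \cite{Z}.

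One small point to tighten: when you pass from $L_s$ to $\widetilde L_s$ via $\|\widetilde L_s u\|^2\geq(1-\epsilon)\|L_s u\|^2-\epsilon^{-1}\|Ru\|^2$, the factor $(1-\epsilon)$ multiplies the \emph{entire} expansion $\|Pu\|^2+\|Qu\|^2+2\Re\langle Pu,Qu\rangle$, hence also the Rellich boundary integral, so you do not immediately retain its exact coefficients $4s,\,2s,\,2s^3$ together with an unscaled right-hand side. The cleanest fix (and what is implicit in \cite{K,Z}) is to estimate the mixed term $2\Re\langle L_su,Ru\rangle$ not against $\|L_su\|^2$ but by integrating by parts once so that it is controlled by $o(1)\|\nabla u\|_{L^2}^2+o(s^2)\|u\|_{L^2}^2$ plus boundary terms of the form already present with generic constants; these are then absorbed into $C,C_1,C_2$, leaving the Rellich boundary integral untouched with its precise coefficients. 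Apart from this bookkeeping, your argument is correct.
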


We also need the following result.
\begin{proposition}[Knudsen \cite{K}]\label{Knu}
     Suppose $\gamma_j\in C^1(\overline\Omega)$ and $u_j\in H^1(\Omega)$ satisfy $\nabla\cdot \gamma_j\nabla u_j=0$ in $\Omega$ for $j=1,2$. Suppose that $\tilde u_1\in H^1(\Omega)$ satisfies $\nabla\cdot \gamma_1\nabla \tilde u_1=0$ with $\tilde u_1=u_2$ on $\partial\Omega$. Then
\begin{align}\label{boundaryint}
     &\int_\Omega \left(\sqrt{\gamma_1}\nabla \sqrt{\gamma_2}-\sqrt{\gamma_2}\nabla \sqrt{\gamma_1}  \right) \cdot \nabla\left(u_1u_2\right) dx\notag\\
     &=\int_{\partial\Omega} \gamma_1\partial_\nu\left(\tilde u_1-u_2\right)u_1 dS+\int_{\partial\Omega} (\gamma_1-\sqrt{\gamma_1\gamma_2})(u_1\partial_\nu u_2-u_2\partial_\nu u_1)dS,
\end{align}
where the integral is understood in the sense of the dual pairing between $H^{1/2}(\partial\Omega)$ and $H^{-1/2}(\partial\Omega)$.
\end{proposition}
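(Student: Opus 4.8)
The plan is to derive the identity purely from the weak formulations of the two conductivity equations, choosing test functions that manufacture the gradient $\nabla(u_1u_2)$ on the left-hand side; crucially, no second derivatives of the $\gamma_j$ are needed, so the argument stays within the stated $C^1$ regularity. Write $a_j=\sqrt{\gamma_j}$ and set $b=a_1\nabla a_2-a_2\nabla a_1$, the vector field appearing in the left-hand integrand. The starting observation is the pair of quotient rules $\nabla(a_2/a_1)=b/\gamma_1$ and $\nabla(a_1/a_2)=-b/\gamma_2$, which is exactly what couples $b$ to the solutions. Since $\gamma_j\in C^1(\overline\Omega)$ with a positive lower bound, the functions $(a_2/a_1)u_2$ and $(a_1/a_2)u_1$ lie in $H^1(\Omega)$ and so are admissible test functions.

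First I would test the weak form of $\nabla\cdot\gamma_1\nabla u_1=0$ against $\varphi=(a_2/a_1)u_2$. Expanding $\nabla\varphi=u_2(b/\gamma_1)+(a_2/a_1)\nabla u_2$ and using $\gamma_1(a_2/a_1)=\sqrt{\gamma_1\gamma_2}$ gives $\int_\Omega u_2\,b\cdot\nabla u_1\,dx+\int_\Omega\sqrt{\gamma_1\gamma_2}\,\nabla u_1\cdot\nabla u_2\,dx=\int_{\partial\Omega}\sqrt{\gamma_1\gamma_2}\,u_2\partial_\nu u_1\,dS$. Symmetrically, testing $\nabla\cdot\gamma_2\nabla u_2=0$ against $\psi=(a_1/a_2)u_1$ produces $-\int_\Omega u_1\,b\cdot\nabla u_2\,dx+\int_\Omega\sqrt{\gamma_1\gamma_2}\,\nabla u_1\cdot\nabla u_2\,dx=\int_{\partial\Omega}\sqrt{\gamma_1\gamma_2}\,u_1\partial_\nu u_2\,dS$. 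Subtracting the two relations, the symmetric cross term $\int_\Omega\sqrt{\gamma_1\gamma_2}\,\nabla u_1\cdot\nabla u_2\,dx$ cancels, and since $u_2\,b\cdot\nabla u_1+u_1\,b\cdot\nabla u_2=b\cdot\nabla(u_1u_2)$ I obtain the clean intermediate identity $\int_\Omega b\cdot\nabla(u_1u_2)\,dx=\int_{\partial\Omega}\sqrt{\gamma_1\gamma_2}\,(u_2\partial_\nu u_1-u_1\partial_\nu u_2)\,dS$.

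It then remains to show that this right-hand side equals the two-term boundary expression in the statement. Expanding $(\gamma_1-\sqrt{\gamma_1\gamma_2})(u_1\partial_\nu u_2-u_2\partial_\nu u_1)$, the $\sqrt{\gamma_1\gamma_2}$-part reproduces exactly the intermediate right-hand side, so I only need the leftover terms $\int_{\partial\Omega}\gamma_1\partial_\nu(\tilde u_1-u_2)u_1\,dS+\int_{\partial\Omega}\gamma_1(u_1\partial_\nu u_2-u_2\partial_\nu u_1)\,dS$ to vanish. Here the $\gamma_1 u_1\partial_\nu u_2$ contributions cancel, and using $\tilde u_1=u_2$ on $\partial\Omega$ the remainder collapses to $\int_{\partial\Omega}\gamma_1(u_1\partial_\nu\tilde u_1-\tilde u_1\partial_\nu u_1)\,dS$, which is zero: both $u_1$ and $\tilde u_1$ solve $\nabla\cdot\gamma_1\nabla(\cdot)=0$, so the symmetry of the Dirichlet form gives $\langle\gamma_1\partial_\nu u_1,\tilde u_1\rangle=\int_\Omega\gamma_1\nabla u_1\cdot\nabla\tilde u_1\,dx=\langle\gamma_1\partial_\nu\tilde u_1,u_1\rangle$.

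I do not expect a serious analytic obstacle: everything reduces to the weak formulation and the duality pairing between $H^{1/2}(\partial\Omega)$ and $H^{-1/2}(\partial\Omega)$, so the only care needed is to read every boundary term as that dual pairing rather than a pointwise integral. The genuine content — the one step that is not mechanical — is the engineered choice of test functions $(a_2/a_1)u_2$ and $(a_1/a_2)u_1$, designed through the quotient rule so that simultaneously $b\cdot\nabla(u_1u_2)$ emerges and the $\sqrt{\gamma_1\gamma_2}\,\nabla u_1\cdot\nabla u_2$ terms cancel upon subtraction. The auxiliary solution $\tilde u_1$ enters only at the final bookkeeping stage, where its defining property $\tilde u_1=u_2$ on $\partial\Omega$ together with the symmetry of the $\gamma_1$-form disposes of the extra boundary contributions.
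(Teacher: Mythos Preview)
Your argument is correct. The paper does not actually give its own proof of this proposition; it is cited from Knudsen \cite{K} with the remark that the only difference from Lemma~4.1 there is that here $\gamma_1$ and $\gamma_2$ need not agree on $\partial\Omega$, which accounts for the extra boundary term $\int_{\partial\Omega}(\gamma_1-\sqrt{\gamma_1\gamma_2})(u_1\partial_\nu u_2-u_2\partial_\nu u_1)\,dS$. Your derivation---testing the two conductivity equations against $(\sqrt{\gamma_2}/\sqrt{\gamma_1})u_2$ and $(\sqrt{\gamma_1}/\sqrt{\gamma_2})u_1$, subtracting to isolate $b\cdot\nabla(u_1u_2)$, and then using the symmetry of the $\gamma_1$-Dirichlet form together with $\tilde u_1|_{\partial\Omega}=u_2|_{\partial\Omega}$ to rewrite the boundary contribution---is precisely Knudsen's argument, carried through without the simplifying assumption $\gamma_1=\gamma_2$ on $\partial\Omega$. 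All manipulations stay within $H^1$ and the $H^{1/2}$--$H^{-1/2}$ pairing, as you note, so the $C^1$ regularity suffices.
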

Note that this proposition is slightly different from the Lemma 4.1
in \cite{K} due to different assumptions on
$\gamma|_{\partial\Omega}$. In \cite{K}, they have
$\gamma_1=\gamma_2$ on $\partial\Omega$, so the second term on the
right hand side of (\ref{boundaryint}) vanishes.

Using Theorem \ref{AL} and the trace theorem, we get
\begin{align}\label{3.15inq}
  \left| \int_{\partial\Omega_{-,\varepsilon}}((\gamma_1-\gamma_2)\partial_\nu u_2)  u_1 dS \right|^2
&\lesssim    \|\gamma_1-\gamma_2\|^2_{L^\infty(\partial\Omega)} \|\nabla u_2\|^2_{H^1(\Omega)} \|u_1\|^2_{H^1(\Omega)} \notag\\
&\lesssim     \|\tilde \Lambda_{\gamma_1}-\tilde\Lambda_{\gamma_2}\|^{2}_* \|u_2\|^2_{H^2(\Omega)} \|u_1\|^2_{H^1(\Omega)}.
\end{align}
Note that since $\gamma_2\in C^{1,\sigma}$, the elliptic regularity
theorem implies that $u_2\in H^2(\Omega).$ By using the equality
that
$$
     \gamma_1\partial_\nu(\tilde u_1-u_2)u_1=(\gamma_1\partial_\nu\tilde u_1-\gamma_2\partial_\nu u_2)u_1+((\gamma_1-\gamma_2)\partial_\nu u_2)u_1
$$
and (\ref{3.15inq}), we have
\begin{align}\label{r1}
      \left| \int_{\partial\Omega_{-,\varepsilon}}  \gamma_1\partial_\nu(\tilde u_1-u_2)u_1 dS  \right|^2
\lesssim     \|\tilde \Lambda_{\gamma_1}-\tilde\Lambda_{\gamma_2}\|^{2}_*  \|u_2\|^2_{H^2(\Omega)} \|u_1\|^2_{H^1(\Omega)}.
\end{align}
Proposition \ref{Knu} and (\ref{r1}) imply that
\begin{align}\label{int}
     &\left|\int_\Omega \left(\sqrt{\gamma_1}\nabla \sqrt{\gamma_2}-\sqrt{\gamma_2}\nabla \sqrt{\gamma_1} \right) \cdot \nabla\left(u_1u_2\right) dx\right|^2\notag\\
%&\lesssim \left|\int_{\partial\Omega_{+,\varepsilon}} \gamma_1\partial_\nu\left(\tilde u_1-u_2\right)u_1 dS\right|^2+\left|\int_{\partial\Omega_{-,\varepsilon}} \gamma_1\partial_\nu\left(\tilde u_1-u_2\right)u_1 dS\right|^2+\left|
%+\int_{\partial\Omega} (\gamma_1-\sqrt{\gamma_1\gamma_2})(u_1\partial_\nu u_2-u_2\partial_\nu u_1)dS\right|^2\notag\\
&\lesssim  \left|\int_{\partial\Omega_{+,\varepsilon}} \gamma_1\partial_\nu\left(\tilde u_1-u_2\right)u_1 dS\right|^2+ \|\tilde \Lambda_{\gamma_1}-\tilde\Lambda_{\gamma_2}\|^{2}_* \|u_2\|^2_{H^2(\Omega)} \|u_1\|^2_{H^2(\Omega)}.
\end{align}

In the remaining part of this section, we will estimate the first
term on the right hand side of (\ref{int}). Denote
$u_0=e^{\frac{\phi_{1s}}{2}}\left(\tilde u_1-u_2\right)$ and $\delta
u=\left(e^{\frac{\phi_{1s}}{2}}-e^{\frac{\phi_{2s}}{2}}\right)u_2$.
Let $u=u_0+\delta u$. Observing that
\begin{align}\label{3.16}
           \left|\int_{\partial\Omega_{+,\varepsilon}} \gamma_1\partial_\nu(\tilde u_1-u_2)u_1 dS\right|^2
&\lesssim      \| 1+w_1 \|^2_{L^2(\partial\Omega_{+,\varepsilon})} \int_{\partial\Omega_{+,\varepsilon}} e^{-2x\cdot
            s\eta} |\partial_\nu(\tilde u_1-u_2)|^2 dS \notag\\
&\lesssim    \left(\int_{\partial\Omega_{+,\varepsilon}}
            e^{-2x\cdot s\eta} |\partial_\nu u|^2 dS+\int_{\partial\Omega_{+,\varepsilon}} e^{-2x\cdot s\eta}
            |\partial_\nu \delta u|^2 dS\right),
\end{align}
Here we use the face that if $s$ is large,
$\|w_1\|^2_{\dot{X}_\zeta^{1/2}}$ is small compared to $1$ according
to Theorem \ref{w}. Thus
 \begin{align*}
 \| 1+w_1 \|^2_{L^2(\partial\Omega_{+,\varepsilon})} \lesssim 1+\|w_1\|^2_{\dot{X}_\zeta^{1/2}}\lesssim 1
 \end{align*}
by applying Lemma \ref{2.2}.

\begin{lemma}\label{ldelta}
       Let $\Omega\subset \mathbb{R}^n, n\geq 3$, be an open and bounded domain with $C^2$ boundary.
       For $i=1,2$, let $\gamma_i\in C^{1,\sigma}(\overline\Omega)\cap H^{\frac{3}{2}+\sigma}(\Omega)$
       be a real-valued function and $\gamma_i>\gamma_0>0$. Suppose that
       $\gamma_1|_{\partial\Omega_{+,\varepsilon}}=\gamma_2|_{\partial\Omega_{+,\varepsilon}}$
       and $\partial_\nu \gamma_1|_{\partial\Omega_{+,\varepsilon}}=\partial_\nu \gamma_2|_{\partial\Omega_{+,\varepsilon}}$.
       If $s$ is large, then
\begin{align}\label{delta}
                & \int_{\partial\Omega_{-,\varepsilon}}  e^{-2x\cdot s\eta}|\nabla \delta u|^2 dS
 \lesssim       \left(s^{-2\sigma}+ \|\tilde \Lambda_{\gamma_1}-\tilde\Lambda_{\gamma_2}\|^{2\theta}_{*}+s^2\|\tilde \Lambda_{\gamma_1}-\tilde\Lambda_{\gamma_2}\|^2_{*}  \right),\\
                & \int_{\partial\Omega_{-,\varepsilon}}  e^{-2x\cdot s\eta} |\delta u|^2 dS
                \lesssim        \left(s^{-2-2\sigma}+\|\tilde \Lambda_{\gamma_1}-\tilde\Lambda_{\gamma_2}\|^2_{*}\right)
.
 \end{align}
Moreover, we have
 \begin{align}\label{delta2}
                 & \int_{\partial\Omega_{+,\varepsilon}}  e^{-2x\cdot s\eta}|\nabla \delta u|^2 dS
 \lesssim       s^{-2\sigma},\\
                & \int_{\partial\Omega_{+,\varepsilon}}  e^{-2x\cdot s\eta} |\delta u|^2 dS
 \lesssim        s^{-2-2\sigma}
\end{align}
when $s$ is sufficiently large.
\end{lemma}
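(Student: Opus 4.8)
The plan is to analyze $\delta u = \bigl(e^{\phi_{1s}/2}-e^{\phi_{2s}/2}\bigr)u_2$ directly by writing the difference of the two mollified logarithms in terms of quantities we already control, namely $\gamma_1-\gamma_2$, $\nabla(\gamma_1-\gamma_2)$, and the regularization errors from Lemma \ref{pre}. First I would write $e^{\phi_{1s}/2}-e^{\phi_{2s}/2} = e^{\phi_{2s}/2}\bigl(e^{(\phi_{1s}-\phi_{2s})/2}-1\bigr)$ and use that $\phi_{js}=\Psi_s*\log\gamma_j$ so that $\phi_{1s}-\phi_{2s}=\Psi_s*(\log\gamma_1-\log\gamma_2)$; since $\log$ is smooth on $[\gamma_0,\infty)$ and $\|\gamma_j\|_{C^{1,\sigma}}\le M$, the factor $e^{\phi_{2s}/2}$ and its gradient are bounded, and $e^{(\phi_{1s}-\phi_{2s})/2}-1 = O(|\phi_{1s}-\phi_{2s}|)$ pointwise with a similar bound on the gradient after one differentiation. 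Thus, on the boundary,
\begin{align*}
 |\delta u| &\lesssim |\phi_{1s}-\phi_{2s}|\,|u_2|, \\
 |\nabla \delta u| &\lesssim \bigl(|\phi_{1s}-\phi_{2s}| + |\nabla(\phi_{1s}-\phi_{2s})|\bigr)\bigl(|u_2|+|\nabla u_2|\bigr),
\end{align*}
so everything reduces to controlling $\phi_{1s}-\phi_{2s}$ and $\nabla(\phi_{1s}-\phi_{2s})$ in $L^\infty$ near the relevant piece of $\partial\Omega$, together with the weighted factor $e^{-x\cdot s\eta}$, which is bounded on $\supp\Psi_s$-translates and contributes only harmless constants on a compact set.

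Next I would split $\phi_{1s}-\phi_{2s} = (\phi_{1s}-\phi_1)-(\phi_{2s}-\phi_2)+(\phi_1-\phi_2)$ and similarly for the gradient using $A_j=\nabla\phi_j$: $\nabla(\phi_{1s}-\phi_{2s}) = (A_{1s}-A_1)-(A_{2s}-A_2)+(A_1-A_2)$. The regularization terms $\phi_{js}-\phi_j$ and $A_{js}-A_j$ are bounded by $Cs^{-1-\sigma}$ and $Cs^{-\sigma}$ respectively by Lemma \ref{pre}. For the genuine differences $\phi_1-\phi_2$ and $A_1-A_2$ on the boundary, I would invoke Theorem \ref{AL}: on $\partial\Omega$ (in $L^\infty$) one has $\|\gamma_1-\gamma_2\|_{L^\infty(\partial\Omega)}\lesssim \|\tilde\Lambda_{\gamma_1}-\tilde\Lambda_{\gamma_2}\|_*$ and $\sum_{|\alpha|=1}\|\partial^\alpha\gamma_1-\partial^\alpha\gamma_2\|_{L^\infty(\partial\Omega)}\lesssim \|\tilde\Lambda_{\gamma_1}-\tilde\Lambda_{\gamma_2}\|_*^\theta$, which via the smoothness of $\log$ and the lower bound $\gamma_j\ge\gamma_0$ transfers to $\|\log\gamma_1-\log\gamma_2\|_{L^\infty(\partial\Omega)}\lesssim \|\tilde\Lambda_{\gamma_1}-\tilde\Lambda_{\gamma_2}\|_*$ and $\|A_1-A_2\|_{L^\infty(\partial\Omega)}\lesssim \|\tilde\Lambda_{\gamma_1}-\tilde\Lambda_{\gamma_2}\|_*^\theta + \|\tilde\Lambda_{\gamma_1}-\tilde\Lambda_{\gamma_2}\|_*$ (the $\theta$-power dominating when the norm is small). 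On $\overline{\partial\Omega_{+,\varepsilon}}$ the hypotheses $\gamma_1=\gamma_2$ and $\partial_\nu\gamma_1=\partial_\nu\gamma_2$ hold; to conclude that the \emph{full} gradient $A_1-A_2$ vanishes there, I would note that $\gamma_1=\gamma_2$ on the (relatively open) surface $\partial\Omega_{+,\varepsilon}$ forces equality of all tangential derivatives, and the normal derivatives agree by hypothesis, so $A_1=A_2$ and $\phi_1=\phi_2$ on $\partial\Omega_{+,\varepsilon}$. Hence on $\partial\Omega_{+,\varepsilon}$ only the regularization errors survive, giving the bounds $s^{-2\sigma}$ and $s^{-2-2\sigma}$ in \eqref{delta2}; on $\partial\Omega_{-,\varepsilon}$ we keep all three contributions, producing $s^{-2\sigma}$, $\|\tilde\Lambda_{\gamma_1}-\tilde\Lambda_{\gamma_2}\|_*^{2\theta}$, and $s^2\|\tilde\Lambda_{\gamma_1}-\tilde\Lambda_{\gamma_2}\|_*^{2}$ (the $s^2$ coming from the CGO factor $e^{x\cdot\zeta}$ hidden in $u_2=\sqrt{\gamma_2}^{-1}e^{x\cdot\zeta_2}(1+\psi_2)$ whose gradient carries a factor $|\zeta_2|\lesssim s$).

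The remaining bookkeeping is to control the factors $\int_{\partial\Omega}e^{-2x\cdot s\eta}|u_2|^2\,dS$ and $\int_{\partial\Omega}e^{-2x\cdot s\eta}|\nabla u_2|^2\,dS$. Writing $u_2 = \sqrt{\gamma_2}^{-1}e^{x\cdot\zeta_2}(1+\psi_2)$ and noting $|e^{-x\cdot s\eta}e^{x\cdot\zeta_2}| = |e^{-ix\cdot(\cdots)}|=1$ (since $\zeta_2 = s\eta - i(\cdots)$ has real part exactly $s\eta$), the weighted exponential exactly cancels the growing CGO exponential; then $1+\psi_2$ and its gradient are bounded in $L^2(\partial\Omega)$ by Lemma \ref{2.2} combined with \eqref{psi3.6}, Theorem \ref{w}, and \eqref{normw} (integrated in $s$, $\eta$ if one wants an averaged statement, or for a suitable choice of $\zeta_2$ pointwise), with the gradient picking up at most a factor $s$. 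Multiplying the pointwise bounds on $\delta u$, $\nabla\delta u$ by these boundary-norm factors, integrating over $\partial\Omega_{\pm,\varepsilon}$, and absorbing the benign constant $e^{-x\cdot s\eta}$ over the compact set $\overline\Omega$ into $\lesssim$, one reads off \eqref{delta} and \eqref{delta2}. The main obstacle I anticipate is the bookkeeping of powers of $s$: one must track carefully that differentiating $u_2$ or $\delta u$ costs exactly one factor of $s$ (not more), that the $e^{-x\cdot s\eta}$ weight kills the CGO growth exactly, and that the $\theta$-power from Theorem \ref{AL} for the first derivatives is the one that appears in \eqref{delta}; a sloppy estimate here would spoil the final stability exponent. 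A secondary subtlety is justifying that $\gamma_1=\gamma_2$ on the open surface $\partial\Omega_{+,\varepsilon}$ really gives equality of all tangential derivatives of $\log\gamma_j$ there, which is immediate but worth stating explicitly.
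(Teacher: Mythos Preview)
Your proposal is correct and follows essentially the same route as the paper: decompose $e^{\phi_{1s}/2}-e^{\phi_{2s}/2}$ into mollification errors handled by Lemma~\ref{pre} plus the genuine boundary difference handled by Theorem~\ref{AL}, then use the exact cancellation $|e^{-x\cdot s\eta}e^{x\cdot\zeta_2}|=1$ and Lemma~\ref{2.2} to control the $u_2$-factors (the paper writes these with $w_2$ rather than $\psi_2$, and passes through the intermediaries $\sqrt{\gamma_j}$ via the triangle inequality instead of factoring out $e^{\phi_{2s}/2}$ and linearizing, but this is cosmetic). One phrasing to fix: the closing remark about ``absorbing the benign constant $e^{-x\cdot s\eta}$ over the compact set'' is misleading, since that weight is \emph{not} uniformly bounded in $s$; you have already, correctly, cancelled it against the CGO exponential in $u_2$, so after that cancellation there is nothing exponential left to absorb.
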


\begin{proof}
We will prove the estimate for
$\int_{\partial\Omega_{-,\varepsilon}}  e^{-2x\cdot s\eta}|\nabla
\delta u|^2 dS$ first. We consider
\begin{align}\label{1}
                  \int_{\partial\Omega_{-,\varepsilon}}  e^{-2x\cdot s\eta}|\nabla \delta u|^2 dS
& \lesssim         \int_{\partial\Omega_{-,\varepsilon}}  e^{-2x\cdot s\eta}\left| \nabla \left( e^{\frac{\phi_{1s}}{2}}-e^{\frac{\phi_{2s}}{2}}  \right)  \right|^2 |u_2|^2 dS\notag\\
    &\quad+  \int_{\partial\Omega_{-,\varepsilon}}  e^{-2x\cdot s\eta}\left|e^{\frac{\phi_{1s}}{2}}-e^{\frac{\phi_{2s}}{2}} \right|^2 |\nabla u_2|^2 dS.
\end{align}
Using Theorem \ref{AL} and Lemma \ref{pre}, the first term of the right side of (\ref{1}) can be written as
\begin{align*}
              &\int_{\partial\Omega_{-,\varepsilon}}  e^{-2x\cdot s\eta}\left| \nabla \left( e^{\frac{\phi_{1s}}{2}}-e^{\frac{\phi_{2s}}{2}} \right)  \right|^2 |u_2|^2 dS\\
 &\lesssim   \int_{\partial\Omega_{-,\varepsilon}}  e^{-2x\cdot s\eta}\left(\left| \nabla \left( e^{\frac{\phi_{1s}}{2}}-\sqrt{\gamma_1}  \right)  \right|^2+\left|\nabla \left(\sqrt{\gamma_1}-\sqrt{\gamma_2}\right)\right|^2 +   \left| \nabla \left( e^{\frac{\phi_{2s}}{2}}-\sqrt{\gamma_2}  \right)  \right|^2\right) |u_2|^2 dS\\
 &\lesssim   \sum_{j=1}^2 \left(\|A_{js}-A_j\|^2_{L^\infty(\overline\Omega)}+  \|  \nabla (\gamma_1-\gamma_2) \|^2_{L^\infty(\partial\Omega)} + \| \gamma_1-\gamma_2 \|^2_{L^\infty(\partial\Omega)}   +\|\phi_{js}-\phi_j\|^2_{L^\infty(\overline\Omega)}\right) \\
 &\quad \left( \| 1+w_2 \|^2_{L^2(\partial\Omega)}\right)\\
&\lesssim    \left(s^{-2\sigma}+s^{-2-2\sigma}+\|\tilde \Lambda_{\gamma_1}-\tilde\Lambda_{\gamma_2}\|^{2\theta}_{*}+\|\tilde \Lambda_{\gamma_1}-\tilde\Lambda_{\gamma_2}\|^{2}_{*}\right)\left(1+\|w_2\|^2_{\dot{X}^{1/2}_{\zeta_2}} \right).
\end{align*}
We use similar arguments to estimate the second term of (\ref{1}).
\begin{align*}
            &\int_{\partial\Omega_{-,\varepsilon}}  e^{-2x\cdot s\eta}\left|e^{\frac{\phi_{1s}}{2}}-e^{\frac{\phi_{2s}}{2}} \right|^2 |\nabla u_2|^2 dS\\
&\lesssim    \int_{\partial\Omega_{-,\varepsilon}}\left(\left|  e^{\frac{\phi_{1s}}{2}}-\sqrt{\gamma_1} \right|^2+  \left|\sqrt{\gamma_1}-\sqrt{\gamma_2}\right|^2   +\left| e^{\frac{\phi_{2s}}{2}}-\sqrt{\gamma_2}\right|^2\right) |\nabla w_2|^2 dS   \\
 &\quad +s^2\int_{\partial\Omega_{-,\varepsilon}}\left(\left|  e^{\frac{\phi_{1s}}{2}}-\sqrt{\gamma_1} \right|^2+  \left|\sqrt{\gamma_1}-\sqrt{\gamma_2}\right|^2+\left| e^{\frac{\phi_{2s}}{2}}-\sqrt{\gamma_2}\right|^2\right) \left(1+|w_2|^2\right) dS   \\
&\lesssim   \sum_{j=1}^2 \left(\|\phi_{js}-\phi_j\|^2_{L^\infty(\overline\Omega)}+  \|\sqrt{\gamma_1}-\sqrt{\gamma_2}\|^2_{L^\infty(\partial\Omega)}\right)\left(\|\nabla w_2\|^2_{L^2(\partial \Omega)}  +s^2 \left(1+ \| w_2\|^2_{L^2(\partial \Omega)}\right)\right)\\
&\lesssim     \left(s^{-2\sigma}+s^2\|\tilde \Lambda_{\gamma_1}-\tilde\Lambda_{\gamma_2}\|^2_{*}\right)\left(1+\|w_2\|^2_{\dot{X}^{1/2}_{\zeta_2}} \right).
\end{align*}
Thus we have
$$
        \int_{\partial\Omega_{-,\varepsilon}}  e^{-2x\cdot s\eta}|\nabla  \delta u|^2 dS
 \lesssim       \left(s^{-2\sigma}+ \|\tilde \Lambda_{\gamma_1}-\tilde\Lambda_{\gamma_2}\|^{2\theta}_{*}+s^2\|\tilde \Lambda_{\gamma_1}-\tilde\Lambda_{\gamma_2}\|^2_{*}  \right)\left(1+\|w_2\|^2_{\dot{X}^{1/2}_{\zeta_2}} \right).
$$

Since $\gamma_1|_{\partial\Omega_{+,\varepsilon}}=\gamma_2|_{\partial\Omega_{+,\varepsilon}}$ and $\partial_\nu \gamma_1|_{\partial\Omega_{+,\varepsilon}}=\partial_\nu \gamma_2|_{\partial\Omega_{+,\varepsilon}}$, the estimate of
$\int_{\partial\Omega_{+,\varepsilon}}  e^{-2x\cdot s\eta}|\nabla \delta u|^2 dS$ does not contain the $\|\tilde \Lambda_{\gamma_1}-\tilde\Lambda_{\gamma_2}\|_{*} $ terms. Thus,
$$
       \int_{\partial\Omega_{+,\varepsilon}}  e^{-2x\cdot s\eta}|\nabla \delta u|^2 dS
 \lesssim       s^{-2\sigma}\left(1+\|w_2\|^2_{\dot{X}^{1/2}_{\zeta_2}} \right).
$$

Similarly, we can deduce that
\begin{align*}
                  &\int_{\partial\Omega_{-,\varepsilon}}  e^{-2x\cdot s\eta}|\delta u|^2 dS   \lesssim     \left(s^{-2-2\sigma}+\|\tilde \Lambda_{\gamma_1}-\tilde\Lambda_{\gamma_2}\|^2_{*}\right)\left(1+\|w_2\|^2_{\dot{X}^{1/2}_{\zeta_2}} \right)
\end{align*}
and
\begin{align*}
                  &\int_{\partial\Omega_{+,\varepsilon}}  e^{-2x\cdot s\eta}|\delta u|^2 dS      \lesssim     s^{-2-2\sigma}\left(1+\|w_2\|^2_{\dot{X}^{1/2}_{\zeta_2}} \right).
\end{align*}
Since $\|w_2\|^2_{\dot{X}^{1/2}_{\zeta_2}}$ is small compared to $1$ when $s$ is large, we complete the proof.
\end{proof}

\begin{lemma}\label{lemma3.6}
    Under the same assumption as Lemma \ref{ldelta}, we have
\begin{align}\label{delta}
                  \int_{\partial\Omega_{+,\varepsilon}}  e^{-2x\cdot s\eta}|\partial_\nu u|^2 dS
 &\lesssim      s^{-2\sigma}+s^{-1}+ \|\tilde \Lambda_{\gamma_1}-\tilde\Lambda_{\gamma_2}\|^{2\theta}_{*}+s^2\|\tilde \Lambda_{\gamma_1}-\tilde\Lambda_{\gamma_2}\|^2_{*}   \notag\\
              &\quad  +e^{cs}( \|\tilde\Lambda_{\gamma_1}-\tilde\Lambda_{\gamma_2}\|_{*}+ \|\tilde\Lambda_{\gamma_1}-\tilde\Lambda_{\gamma_2}\|^2_{*}  )\|u_2\|^2_{H^2(\Omega)}
\end{align}
for some $0<\theta<1$ when $s$ is sufficiently large.
\end{lemma}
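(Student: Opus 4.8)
The plan is to apply the Carleman estimate of Theorem \ref{car} with $\gamma=\gamma_1$ to the conjugated function $e^{-x\cdot s\eta}u$ (which lies in $H^2(\Omega)$ by elliptic regularity, since $\gamma_j\in C^{1,\sigma}$), exploiting that $u_0=e^{\phi_{1s}/2}(\tilde u_1-u_2)$ vanishes on $\partial\Omega$. First I would record the consequences of $\tilde u_1=u_2$ on $\partial\Omega$: there $u_0=0$, hence $\nabla u_0=(\partial_\nu u_0)\nu$, and therefore, with $u=u_0+\delta u$, on $\partial\Omega$ one has $u=\delta u$, $\partial_\nu u=\partial_\nu u_0+\partial_\nu\delta u$, and $\nabla u=(\partial_\nu u_0)\nu+\nabla\delta u$. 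Thus $\int_{\partial\Omega_{+,\varepsilon}}e^{-2x\cdot s\eta}|\partial_\nu u|^2\,dS\lesssim\int_{\partial\Omega_{+,\varepsilon}}e^{-2x\cdot s\eta}|\partial_\nu u_0|^2\,dS+s^{-2\sigma}$, the last term coming from Lemma \ref{ldelta}, so it suffices to estimate $\int_{\partial\Omega_{+,\varepsilon}}e^{-2x\cdot s\eta}|\partial_\nu u_0|^2\,dS$.

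Next I would insert $e^{-x\cdot s\eta}u$ into (\ref{carleman}); its right-hand side becomes $\|e^{-x\cdot s\eta}(-\Delta+(A_{1s}-A_1)\cdot\nabla+q_{1s})u\|^2_{L^2(\Omega)}$. Expanding the boundary integrals with the identities above, the terms quadratic in $\partial_\nu u_0$ combine to $\int_{\partial\Omega}2s(\nu\cdot\eta)e^{-2x\cdot s\eta}|\partial_\nu u_0|^2\,dS$; the cross terms coupling $\partial_\nu u_0$ to $\delta u$ and $\nabla\delta u$ are split by Young's inequality, yielding a small multiple of $s(\nu\cdot\eta)e^{-2x\cdot s\eta}|\partial_\nu u_0|^2$ plus a remainder bounded by $s^3e^{-2x\cdot s\eta}|\delta u|^2+se^{-2x\cdot s\eta}|\nabla\delta u|^2$, and the remaining pure-$\delta u$ boundary terms are dominated the same way. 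Splitting $\partial\Omega=\partial\Omega_{+,\varepsilon}\cup\partial\Omega_{-,\varepsilon}$, using $\nu\cdot\eta>\varepsilon$ on $\partial\Omega_{+,\varepsilon}$ both for coercivity and to absorb the small cross term, moving the contribution over $\partial\Omega_{-,\varepsilon}$ and the discarded nonnegative interior terms to the right, and dividing by $s$, I expect to reach
\begin{align*}
\int_{\partial\Omega_{+,\varepsilon}}e^{-2x\cdot s\eta}|\partial_\nu u_0|^2\,dS
&\lesssim\frac1s\big\|e^{-x\cdot s\eta}(-\Delta+(A_{1s}-A_1)\cdot\nabla+q_{1s})u\big\|^2_{L^2(\Omega)}+\int_{\partial\Omega_{-,\varepsilon}}e^{-2x\cdot s\eta}|\partial_\nu u_0|^2\,dS\\
&\quad+s^2\int_{\partial\Omega}e^{-2x\cdot s\eta}|\delta u|^2\,dS+\int_{\partial\Omega}e^{-2x\cdot s\eta}|\nabla\delta u|^2\,dS.
\end{align*}

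It then remains to estimate these four terms. For the source, I would use that $e^{\phi_{1s}/2}\tilde u_1$ is annihilated by $-\Delta+(A_{1s}-A_1)\cdot\nabla+q_{1s}$ while $e^{\phi_{2s}/2}u_2=e^{x\cdot\zeta_2}(1+w_2)$ is annihilated by the analogous $\gamma_2$-operator, so that $(-\Delta+(A_{1s}-A_1)\cdot\nabla+q_{1s})u=-\big[((A_{1s}-A_1)-(A_{2s}-A_2))\cdot\nabla+(q_{1s}-q_{2s})\big]\big(e^{x\cdot\zeta_2}(1+w_2)\big)$; since $\Re\zeta_2=s\eta$, the factor $e^{-x\cdot s\eta}e^{x\cdot\zeta_2}$ has modulus $1$, and then Lemma \ref{pre} (using its $L^2$-bounds $\|A_{js}-A_j\|_{L^2}\lesssim s^{-\frac12-\sigma}$, $\|\nabla\cdot A_{js}\|_{L^2}\lesssim s^{\frac12-\sigma}$, which absorb $|\zeta_2|=\sqrt2\,s$) together with the $L^2$- and $H^1$-smallness of $w_2$ from Theorem \ref{w} should give $\frac1s\|\cdots\|^2_{L^2(\Omega)}\lesssim s^{-2\sigma}+s^{-1}$. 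For the second term I would use $\partial_\nu u_0=e^{\phi_{1s}/2}\partial_\nu(\tilde u_1-u_2)$ and, on $\partial\Omega_{-,\varepsilon}$, the identity $\gamma_1\partial_\nu(\tilde u_1-u_2)=(\tilde\Lambda_{\gamma_1}-\tilde\Lambda_{\gamma_2})(u_2|_{\partial\Omega})+(\gamma_2-\gamma_1)\partial_\nu u_2$; interpolating $\|g\|^2_{L^2}\le\|g\|_{H^{1/2}}\|g\|_{H^{-1/2}}$, bounding the $H^{-1/2}$-norm by $\|\tilde\Lambda_{\gamma_1}-\tilde\Lambda_{\gamma_2}\|_*\|u_2\|_{H^{1/2}(\partial\Omega)}$ and the $H^{1/2}$-norm by $\|u_2\|_{H^2(\Omega)}$ (elliptic regularity for $\tilde u_1$ and $u_2$), invoking Theorem \ref{AL} for $\|\gamma_1-\gamma_2\|_{L^\infty(\partial\Omega)}$, and crudely estimating $e^{-2x\cdot s\eta}\le e^{cs}$ on $\overline\Omega$, I obtain the term $e^{cs}(\|\tilde\Lambda_{\gamma_1}-\tilde\Lambda_{\gamma_2}\|_*+\|\tilde\Lambda_{\gamma_1}-\tilde\Lambda_{\gamma_2}\|^2_*)\|u_2\|^2_{H^2(\Omega)}$. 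The last two terms are controlled directly by Lemma \ref{ldelta}, giving $s^{-2\sigma}+\|\tilde\Lambda_{\gamma_1}-\tilde\Lambda_{\gamma_2}\|^{2\theta}_*+s^2\|\tilde\Lambda_{\gamma_1}-\tilde\Lambda_{\gamma_2}\|^2_*$. Adding up and combining with the first-paragraph reduction yields (\ref{delta}).

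The main obstacle is the bookkeeping in the second step: expanding the Carleman boundary integral for $e^{-x\cdot s\eta}u$ carefully enough that, once the $|\partial_\nu u_0|^2$ terms are isolated, every remaining boundary contribution is either absorbable into the $\partial\Omega_{+,\varepsilon}$-coercivity or is of the form already estimated in Lemma \ref{ldelta}, and choosing the Young constants small enough that the coercivity survives. A secondary delicate point is the source estimate: the $L^\infty$-bound $\|\nabla\cdot A_{js}\|_{L^\infty}\lesssim s^{1-\sigma}$ is too lossy, so one must arrange that $\nabla\cdot(A_{1s}-A_{2s})$ appears only paired with the $L^2$-function $1+w_2$ and is therefore used only through its $L^2$-bound $s^{\frac12-\sigma}$, which is precisely what produces the leading $s^{-2\sigma}$.
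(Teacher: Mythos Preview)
Your proposal is correct and follows essentially the same route as the paper: apply the Carleman estimate (\ref{carleman}) to $e^{-x\cdot s\eta}u$, use $u_0|_{\partial\Omega}=0$ so that $\nabla u_0=(\partial_\nu u_0)\nu$ on $\partial\Omega$, extract coercivity on $\partial\Omega_{+,\varepsilon}$ from $\nu\cdot\eta>\varepsilon$, estimate the source term via the difference of the $\gamma_1$- and $\gamma_2$-operators acting on $e^{x\cdot\zeta_2}(1+w_2)$ together with Lemma \ref{pre}, bound the $\partial\Omega_{-,\varepsilon}$ contribution of $\partial_\nu u_0$ by interpolation and the localized DtN map, and handle all $\delta u$ boundary terms by Lemma \ref{ldelta}. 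The only cosmetic difference is that you first reduce to $|\partial_\nu u_0|^2$ and use Young's inequality to absorb cross terms, whereas the paper keeps $|\partial_\nu u|^2$ throughout and absorbs via the factor $1/s$ appearing in term $III$; both lead to the same final bound.
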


\begin{proof}
Since $\gamma_1>\gamma_0>0$, we have
$$
     |\partial_\nu (\tilde u_1-u_2)|^2\leq|\gamma_1 \partial_\nu \tilde u_1- \gamma_2\partial_\nu u_2|^2+
|(\gamma_1-\gamma_2)\partial_\nu u_2|^.
$$
The interpolation theory implies that
\begin{align*}
\|\gamma_1 \partial_\nu \tilde u_1-
\gamma_2\partial_\nu u_2\|^2_{L^2(\partial\Omega_{-,\varepsilon})}
 & \lesssim \|\gamma_1 \partial_\nu \tilde u_1- \gamma_2\partial_\nu u_2\|_{H^{1/2}(\partial\Omega_{-,\varepsilon})} \|(\tilde\Lambda_{\gamma_1}-\tilde\Lambda_{\gamma_2})u_2\|_{H^{-1/2}(\partial\Omega_{-,\varepsilon})}\\
 & \lesssim (\|\tilde u_1\|_{H^2(\Omega)}+\|u_2\|_{H^2(\Omega)}) \|\tilde\Lambda_{\gamma_1}-\tilde\Lambda_{\gamma_2}\|_{*} \|u_2\|_{H^1(\Omega)}.
\end{align*}
Thus we can deduce
\begin{align*}
     \|\partial_\nu (\tilde u_1-u_2)\|^2_{L^2(\partial\Omega_{-,\varepsilon})}
     &\lesssim (\|\tilde u_1\|_{H^2(\Omega)}+\|u_2\|_{H^2(\Omega)}) \|\tilde\Lambda_{\gamma_1}-\tilde\Lambda_{\gamma_2}\|_{*} \|u_2\|_{H^1(\Omega)} \\
      & \quad+ \|\tilde\Lambda_{\gamma_1}-\tilde\Lambda_{\gamma_2}\|^2_{*}  \|u_2\|^2_{H^2(\Omega)}
\end{align*}
from Theorem \ref{AL}. By elliptic regularity theorem and $\tilde
u_1|_{\partial\Omega}=u_2|_{\partial\Omega}$, $\|\tilde
u_1\|_{H^2(\Omega)}\lesssim \|u_2\|_{H^2(\Omega)}$. Thus we have
\begin{align}\label{u0}
          \int_{\partial\Omega_{-,\varepsilon}} e^{-2x\cdot s\eta}|\partial_\nu u_0|^2 dS
      &\lesssim  \int_{\partial\Omega_{-,\varepsilon}}   e^{-2x\cdot s\eta}\left|\partial_\nu (\tilde u_1-u_2)\right|^2 dS  \notag\\
      &\lesssim  e^{cs}( \|\tilde\Lambda_{\gamma_1}-\tilde\Lambda_{\gamma_2}\|_{*}+ \|\tilde\Lambda_{\gamma_1}-\tilde\Lambda_{\gamma_2}\|^2_{*}  )\|u_2\|^2_{H^2(\Omega)}
\end{align}
by using the fact that $u_0|_{\partial\Omega}=0$.
Let $v=e^{-x\cdot s\eta }u$. We plug $v$ into the Carleman estimate in Theorem \ref{car}, then we get that
\begin{align*}
   \int_{\partial\Omega_{+,\varepsilon}} 4 \Re{(\partial_\nu v\partial_\eta \overline v)}-2(\nu\cdot \eta)|\nabla v|^2 dS
&\lesssim  s\int_{\partial\Omega}|v|^2 dS
+   \int_{\partial\Omega}  s^2(\nu\cdot \eta)|v|^2 dS+\frac{1}{s}\int_{\partial\Omega}\overline v\partial_\nu v dS \\
&\quad+\frac{1}{s} \|e^{-x\cdot s\eta}(-\Delta+(A_{1s}-A_1)\cdot \nabla+q_{1s})(e^{x\cdot s\eta}v)\|_{L^2(\Omega)}^2\\
&\quad+\int_{\partial\Omega_{-,\varepsilon}} 4 \Re{(\partial_\nu v\partial_\eta \overline v)}-2(\nu\cdot \eta)|\nabla v|^2 dS\\
&=:I+II+III+IV+V.
\end{align*}
For $I$ and $II$, since $u_0|_{\partial\Omega}=0$, it follows that
\begin{align*}
s\int_{\partial\Omega}|v|^2 dS
\lesssim   s\int_{\partial\Omega} e^{-2x\cdot s\eta} |\delta u|^2 dS
\lesssim   \left(s^{-1-2\sigma}+s\|\tilde \Lambda_{\gamma_1}-\tilde\Lambda_{\gamma_2}\|^2_{*}\right)
\end{align*}
and
\begin{align*}
s^2\int_{\partial\Omega}(\nu\cdot \eta)|v|^2 dS
\lesssim    s^2\int_{\partial\Omega} e^{-2x\cdot s\eta} |\delta u|^2 dS
\lesssim   \left(s^{-2\sigma}+s^2\|\tilde \Lambda_{\gamma_1}-\tilde\Lambda_{\gamma_2}\|^2_{*}\right).
\end{align*}
To estimate $III$,
first we observe that
\begin{align*}
       \frac{1}{s}\int_{\partial\Omega}   e^{-2x\cdot s\eta} \overline{\delta u} \partial_\nu u dS
&\lesssim    \frac{1}{s}\int_{\partial\Omega}   e^{-2x\cdot s\eta} |\delta u|^2dS+\frac{1}{s}\int_{\partial\Omega_{-,\varepsilon}}   e^{-2x\cdot s\eta} |\partial_\nu \delta u|^2dS\\
&\quad+\frac{1}{s}\int_{\partial\Omega_{-,\varepsilon}}   e^{-2x\cdot s\eta} |\partial_\nu u_0|^2dS +\frac{1}{s}\int_{\partial\Omega_{+,\varepsilon}}   e^{-2x\cdot s\eta} |\partial_\nu u|^2dS\\
&\lesssim      s^{-1}\left(s^{-2\sigma}+ \|\tilde \Lambda_{\gamma_1}-\tilde\Lambda_{\gamma_2}\|^{2\theta}_{*}+s^2\|\tilde \Lambda_{\gamma_1}-\tilde\Lambda_{\gamma_2}\|^2_{*}  \right)\\
&\quad+\frac{1}{s}\int_{\partial\Omega_{-,\varepsilon}}   e^{-2x\cdot s\eta} |\partial_\nu u_0|^2dS +\frac{1}{s}\int_{\partial\Omega_{+,\varepsilon}}   e^{-2x\cdot s\eta} |\partial_\nu u|^2dS.
\end{align*}
Since $u_0|_{\partial\Omega}=0$, we derive that
\begin{align*}
        III
&=    \frac{1}{s}\int_{\partial\Omega}   e^{-x\cdot s\eta} \overline{\delta u} \partial_\nu (e^{-x\cdot s\eta}   u) dS\\
&=    -(\nu\cdot \eta)\int_{\partial\Omega}   e^{-2x\cdot s\eta} |\delta u|^2 dS  +\frac{1}{s}\int_{\partial\Omega}   e^{-2x\cdot s\eta} \overline{\delta u} \partial_\nu u dS\\
&\lesssim      s^{-1}\left(s^{-2\sigma}+ \|\tilde \Lambda_{\gamma_1}-\tilde\Lambda_{\gamma_2}\|^{2\theta}_{*}+s^2\|\tilde \Lambda_{\gamma_1}-\tilde\Lambda_{\gamma_2}\|^2_{*}  \right)\\
&\quad+\frac{1}{s}\int_{\partial\Omega_{-,\varepsilon}}   e^{-2x\cdot s\eta} |\partial_\nu u_0|^2dS +\frac{1}{s}\int_{\partial\Omega_{+,\varepsilon}}   e^{-2x\cdot s\eta} |\partial_\nu u|^2dS.
\end{align*}

Next we estimate $IV$,
\begin{align*}
          IV
&\leq     \frac{1}{s} \int_{\Omega}  e^{-2x\cdot s\eta}|(-\Delta+(A_{1s}-A_1)\cdot \nabla+q_{1s})e^{x\cdot \zeta_2}(1+w_2)|^2  dx\\
 & \leq     \frac{1}{s}  \int_{\Omega}   e^{-2x\cdot s\eta}|((A_{1s}-A_1)-(A_{2s}-A_2))\cdot \nabla(e^{x\cdot \zeta_2}(1+w_2))  \\
 &\quad  +(q_{1s}-q_{2s}) e^{x\cdot \zeta_2}(1+w_2)|^2    dx.
\end{align*}
Then we deduce that
\begin{align*}
          IV
&\lesssim    s  \int_\Omega   \sum_{j=1}^2 |A_{js}-A_j|^2dx +s \int_\Omega  \sum_{j=1}^2 |A_{js}-A_j|^2|w_2|^2dx
+\frac{1}{s}  \int_\Omega  \sum_{j=1}^2 |A_{js}-A_j|^2|\nabla w_2|^2dx \\
&\quad+ \frac{1}{s} \int_\Omega |q_{2s}-q_{1s}|^2 dx
+\frac{1}{s} \int_\Omega |q_{2s}-q_{1s}|^2 |w_2|^2 dx \\
&\lesssim     s \sum_{j=1}^2 \|A_{js}-A_j\|^2_{L^2} +\sum_{j=1}^2 \|A_{js}-A_j\|^2_{L^\infty} \|w_2\|^2_{\dot{X}^{1/2}_{\zeta_2}}+  \frac{1}{s}\|q_{2s}-q_{1s}\|^2_{L^2} \\
&\quad+\frac{1}{s^2}\|q_{2s}-q_{1s}\|^2_{L^\infty} \|w_2\|^2_{\dot{X}^{1/2}_{\zeta_2}}\\
&\lesssim          s^{-2\sigma}+s^{-1}+   s^{-2\sigma}\|w_2\|^2_{\dot{X}^{1/2}_{\zeta_2}}
\end{align*}
from Lemma \ref{pre}.

Finally, for $V$, since $u_0|_{\partial\Omega}=0$ implies that $\nabla u_0=\partial_\nu u_0$ on $\partial\Omega$. Then we have
\begin{align*}
&\left|\int_{\partial\Omega_{-,\varepsilon}} 4 \Re{(\partial_\nu v\partial_\eta \overline v)}-2(\nu\cdot \eta)|\nabla v|^2 dS\right|\\
&\lesssim     \int_{\partial\Omega_{-,\varepsilon}}|\nabla v|^2 dS\\
&\lesssim     s^2\int_{\partial\Omega_{-,\varepsilon}}e^{-2x\cdot s\eta}|\delta u|^2  dS+\int_{\partial\Omega_{-,\varepsilon}}e^{-2x\cdot s\eta}|\nabla\delta u|^2  dS+\int_{\partial\Omega_{-,\varepsilon}}e^{-2x\cdot s\eta}|\partial_\nu u_0|^2  dS\\
&\lesssim      \left(s^{-2\sigma}+ \|\tilde \Lambda_{\gamma_1}-\tilde\Lambda_{\gamma_2}\|^{2\theta}_{*}+s^2\|\tilde \Lambda_{\gamma_1}-\tilde\Lambda_{\gamma_2}\|^2_{*}  \right)+\int_{\partial\Omega_{-,\varepsilon}}e^{-2x\cdot s\eta}|\partial_\nu u_0|^2  dS.
\end{align*}
Combining the estimates from $I$ to $V$, we obtain
\begin{align}\label{3}
   &\int_{\partial\Omega_{+,\varepsilon}} 4 \Re{(\partial_\nu v\partial_\eta \overline v)}-2(\nu\cdot \eta)|\nabla v|^2 dS \notag \\
&\lesssim    s^{-2\sigma}+s^{-1}+
 \|\tilde \Lambda_{\gamma_1}-\tilde\Lambda_{\gamma_2}\|^{2\theta}_{*}+s^2\|\tilde \Lambda_{\gamma_1}-\tilde\Lambda_{\gamma_2}\|^2_{*}     \notag\\
            &\quad+\int_{\partial\Omega_{-,\varepsilon}}e^{-2x\cdot s\eta}|\partial_\nu u_0|^2  dS+ \frac{1}{s}\int_{\partial\Omega_{+,\varepsilon}}   e^{-2x\cdot s\eta} |\partial_\nu u|^2dS
\end{align}
since $\|w_2\|^2_{\dot{X}^{1/2}_{\zeta_2}}$ can be neglected when $s$ is sufficiently large.

Moreover, for $(\nu\cdot \eta)>\varepsilon >0$, we have
\begin{align}\label{4}
              &\int_{\partial\Omega_{+,\varepsilon}} 4 \Re{(\partial_\nu v\partial_\eta \overline v)}-2(\nu\cdot \eta)|\nabla v|^2 dS \notag\\
&\geq
             \int_{\partial\Omega_{+,\varepsilon}} (\nu\cdot \eta)  e^{-2x\cdot s\eta} |\partial_\nu u|^2dS-
               s^2\int_{\partial\Omega_{+,\varepsilon}}e^{-2x\cdot s\eta}|\delta u|^2  dS-\int_{\partial\Omega_{+,\varepsilon}}e^{-2x\cdot s\eta}|\nabla\delta u|^2.
\end{align}
Combining (\ref{u0}), (\ref{3}) and (\ref{4}) and Lemma \ref{ldelta}, the proof is completed.
\end{proof}

%\HOX{Check $\|w_2\|^2_{\dot{X}^{1/2}_{\zeta_2}}<<1$      }
From (\ref{3.16}), Lemma \ref{ldelta} and Lemma \ref{lemma3.6}, we can deduce
\begin{align}\label{3.28}
           \left|\int_{\partial\Omega_{+,\varepsilon}} \gamma_1\partial_\nu(\tilde u_1-u_2)u_1 dS\right|^2
&\lesssim      s^{-2\sigma}+s^{-1}+ \|\tilde \Lambda_{\gamma_1}-\tilde\Lambda_{\gamma_2}\|^{2\theta}_{*}+s^2\|\tilde \Lambda_{\gamma_1}-\tilde\Lambda_{\gamma_2}\|^2_{*}   \notag\\
              &\quad  +e^{cs}( \|\tilde\Lambda_{\gamma_1}-\tilde\Lambda_{\gamma_2}\|_{*}+ \|\tilde\Lambda_{\gamma_1}-\tilde\Lambda_{\gamma_2}\|^2_{*}  )\|u_2\|^2_{H^2(\Omega)}.
\end{align}
%Thus, we get
%\begin{align}\label{int1}
%     &\left|\int_\Omega \left(\sqrt{\gamma_1}\nabla \sqrt{\gamma_2}-\sqrt{\gamma_2}\nabla \sqrt{\gamma_1} \right) \cdot \nabla\left(u_1u_2\right) dx\right|^2\notag\\
%&\lesssim      s^{-2\sigma}+s^{-1}+ \|\tilde \Lambda_{\gamma_1}-\tilde\Lambda_{\gamma_2}\|^{2\theta}_{*}+s^2\|\tilde \Lambda_{\gamma_1}-\tilde\Lambda_{\gamma_2}\|^2_{*}   \notag\\
%              &\quad  +e^{cs}( \|\tilde\Lambda_{\gamma_1}-\tilde\Lambda_{\gamma_2}\|_{*}+ \|\tilde\Lambda_{\gamma_1}-\tilde\Lambda_{\gamma_2}\|^2_{*}  )\|u_2\|^2_{H^2(\Omega)}  \notag\\
%&\quad+ \|\tilde \Lambda_{\gamma_1}-\tilde\Lambda_{\gamma_2}\|^{2}_* \|u_2\|^2_{H^2(\Omega)} \|u_1\|^2_{H^2(\Omega)}
%\end{align}
%from (\ref{int}) and (\ref{3.16}).
%Since $\gamma_2\in C^{1,\sigma}$, by interior elliptic regularity theorem (see Ch. 6 in \cite{evan}), $\|u_2\|^2_{H^2(\Omega)}\leq \|u_2\|^2_{L^2(B)}$ where $\overline\Omega\subset B$.
Note that $\|u_2\|^2_{H^2(\Omega)}\lesssim e^{cs}$ and
$\|u_1\|^2_{H^2(\Omega)}\lesssim e^{cs}$. Therefore,
\begin{align}\label{int2}
     &\left|\int_\Omega \left(\sqrt{\gamma_1}\nabla \sqrt{\gamma_2}-\sqrt{\gamma_2}\nabla \sqrt{\gamma_1} \right) \cdot \nabla\left(u_1u_2\right) dx\right|^2\notag\\
&\lesssim      s^{-2\sigma}+s^{-1}+ \|\tilde \Lambda_{\gamma_1}-\tilde\Lambda_{\gamma_2}\|^{2\theta}_{*}
             +e^{cs}( \|\tilde\Lambda_{\gamma_1}-\tilde\Lambda_{\gamma_2}\|_{*}+ \|\tilde\Lambda_{\gamma_1}-\tilde\Lambda_{\gamma_2}\|^2_{*}  ).
\end{align}
from (\ref{int}) and (\ref{3.28}).

\section{Stability result}
We consider the function $v:=\log \sqrt{\gamma_1}-\log\sqrt{\gamma_2}\in H^1(\Omega)$. This function $v$ is a weak solution of
\begin{align}\label{equF}
     \Delta v+\nabla(\log \sqrt{\gamma_1}+\log\sqrt{\gamma_2})\nabla v&=F \ \ \mbox{in $\Omega$}\\
      v|_{\partial \Omega}&=(\log \sqrt{\gamma_1}-\log\sqrt{\gamma_2})|_{\partial \Omega}  \notag,
\end{align}
with $F\in H^{-1}(\Omega).$

%Recall that in section \ref{preliminary}, we extend $\gamma_j$ to be the function in $\mathbb{R}^n$ such that $\gamma_j\in C^{1,\sigma}(\mathbb{R}^n)$ and $\gamma_{j}-1\in H^{\frac{3}{2}+\sigma}(\mathbb{R}^n)$ with $\text{supp}(\gamma_{j}-1)\subset \overline B$.
Since $v$ is also a weak solution of the elliptic equation
$\nabla\cdot (\sqrt{\gamma_1}\sqrt{\gamma_2})\nabla
v=(\sqrt{\gamma_1}\sqrt{\gamma_2}) \cdot F$ in $\Omega$, we get the
following estimate
\begin{align}
     \|v\|_{H^{1}(\Omega)} &\lesssim   \|F\|_{H^{-1}(\Omega)}+\|v\|_{H^{1/2}(\partial\Omega)}.
\end{align}
Using interpolation theory, Theorem \ref{AL} and $\gamma_j\in
H^{\frac{3}{2}+\sigma}(\Omega)$, we get
\begin{align}
\|v\|_{H^{1/2}(\partial\Omega)}\lesssim \|v\|^{1/2}_{L^2(\partial\Omega)}\|v\|^{1/2}_{H^{1}(\partial\Omega)}
\lesssim \|\tilde\Lambda_{\gamma_1}-\tilde\Lambda_{\gamma_2}\|^{1/2}_{*}.
\end{align}
Hence, we obtain
\begin{align}\label{v}
     \|v\|_{H^{1}(\Omega)} &\lesssim   \|F\|_{H^{-1}(\Omega)}+\|\tilde\Lambda_{\gamma_1}-\tilde\Lambda_{\gamma_2}\|^{1/2}_{*}.
\end{align}

The stability will now follow after treating
$\|F\|_{H^{-1}(\Omega)}$. Following the argument in \cite{H} and
(\ref{equF}), let $g=\nabla(\log
\sqrt{\gamma_1}+\log\sqrt{\gamma_2})$ and denote by $\tilde f$ the
extension of $f\in L^2(\Omega)$ by zero to $\mathbb{R}^n$. Then for
$\varphi\in H^1_0(\Omega)$ we have
\begin{align*}
     \langle F,\varphi  \rangle &=\int_\Omega -\nabla v\nabla\overline\varphi+(g\nabla v)\overline\varphi dx\\
                &=\int_{\mathbb{R}^n} -\widetilde{\nabla v}\nabla\overline{\widetilde\varphi}+(g\widetilde{\nabla v})\overline{\widetilde\varphi} dx\\
                &=\int_{\mathbb{R}^n}\left(  (ik) \mathcal{F}\widetilde{\nabla v}+ \mathcal{F}(g\widetilde{\nabla v})   \right)\overline{\mathcal{F}\widetilde\varphi} dk.
\end{align*}
Hence
\begin{align*}
     |\langle F,\varphi\rangle|
                \leq\left( \int_{\mathbb{R}^n}\left| (ik) \mathcal{F}\widetilde{\nabla v}+ \mathcal{F}(g\widetilde{\nabla v})   \right|^2 \left(1+|k|^2\right)^{-1}dk \right)^{\frac{1}{2}} \|\tilde\varphi\|_{H^1(\mathbb{R}^n)}.
\end{align*}
Here $\mathcal{F}$ denotes the Fourier transform. Since $\gamma_i\in H^{\frac{3}{2}+\sigma}(\Omega)$, it follows that
\begin{align}\label{F}
         \|F\|^2_{H^{-1}(\Omega)}
&\leq     \int_{|k|\leq R}\left| (ik) \mathcal{F}\widetilde{\nabla v}+ \mathcal{F}(g\widetilde{\nabla v})   \right|^2 \left(1+|k|^2\right)^{-1}dk  \notag\\
          &\quad+ \int_{|k|>R}\left| (ik) \mathcal{F}\widetilde{\nabla v}+ \mathcal{F}(g\widetilde{\nabla v})   \right|^2 \left(1+|k|^2\right)^{-1}dk  \notag\\
&\lesssim   R^n \| (ik) \mathcal{F}\widetilde{\nabla v}+ \mathcal{F}(g\widetilde{\nabla v})  \|_{L^\infty(B(0,R))}^2 \notag \\
          &\quad+ \frac{1}{R^2} \|g\widetilde{\nabla v}\|_{L^2(\mathbb{R}^n)}+ \int_{|k|>R}  \left(1+|k|^2\right)^{\frac{1}{2}}\left|\mathcal{F}\widetilde{\nabla v}\right|^2 \left(1+|k|^2\right)^{-\frac{1}{2}}dk  \notag\\
&\lesssim   R^n \| (ik) \mathcal{F}\widetilde{\nabla v}+ \mathcal{F}(g\widetilde{\nabla v})  \|_{L^\infty(B(0,R))}^2 \notag \\
          &\quad+ \frac{1}{R^2} \|g\widetilde{\nabla v}\|_{L^2(\mathbb{R}^n)}+\frac{1}{R}\|\nabla v\|^2_{H^{\frac{1}{2}}(\Omega)}.
\end{align}

Now we need to estimate $\| (ik) \mathcal{F}\widetilde{\nabla v}+ \mathcal{F}(g\widetilde{\nabla v})  \|_{L^\infty(B(0,R))}^2$.
Denote $q=(ik) \widetilde{\nabla v}+(g\widetilde{\nabla v})$. Plug $u_i= \sqrt{\gamma_i}^{-1}e^{x\cdot\zeta_i}(1+\psi_i), i=1,2,$ into (\ref{int2}), we obtain that
\begin{align}
         |\mathcal{F}(q)(k)|^2
         &=\left|\int_\Omega e^{-ik\cdot x}\left( ik\nabla\left( \log\sqrt{\gamma_1}-\log\sqrt{\gamma_2})+ (  \nabla \log\sqrt{\gamma_1} \right)^2-\left(  \nabla \log\sqrt{\gamma_2} \right)^2  \right) dx\right|^2\notag\\
&\leq     \left|\int_{\Omega}   \left(\sqrt{\gamma_1}\nabla \sqrt{\gamma_2}-\sqrt{\gamma_2}\nabla \sqrt{\gamma_1} \right)\cdot \nabla \left(   \frac{1}{\sqrt{\gamma_1}\sqrt{\gamma_2}}e^{-ik\cdot x}\left( \psi_1+\psi_2+\psi_1\psi_2 \right) \right) dx\right|^2 \notag \\
&\quad+s^{-2\sigma}+s^{-1}+ \|\tilde \Lambda_{\gamma_1}-\tilde\Lambda_{\gamma_2}\|^{2\theta}_{*}
              +e^{cs} ( \|\tilde\Lambda_{\gamma_1}-\tilde\Lambda_{\gamma_2}\|_{*} +  \|\tilde\Lambda_{\gamma_1}-\tilde\Lambda_{\gamma_2}\|^2_{*}).
\end{align}

To estimate
\begin{align*}
               &\ \left|\int_{\Omega}   \left(\sqrt{\gamma_1}\nabla \sqrt{\gamma_2}-\sqrt{\gamma_2}\nabla \sqrt{\gamma_1} \right)\cdot \nabla \left(   \frac{1}{\sqrt{\gamma_1}\sqrt{\gamma_2}}e^{-ik\cdot x}\left( \psi_1+\psi_2+\psi_1\psi_2 \right) \right)  \notag dx\right|^2   \notag\\
&\lesssim      \left|\int_{\Omega}   \left(\sqrt{\gamma_1}\nabla \sqrt{\gamma_2}-\sqrt{\gamma_2}\nabla \sqrt{\gamma_1} \right)\cdot \nabla \left(   \frac{1}{\sqrt{\gamma_1}\sqrt{\gamma_2}}e^{-ik\cdot x}\right)  \left( \psi_1+\psi_2+\psi_1\psi_2 \right) dx\right|^2  \notag\\
              &\quad +\left|\int_{\Omega}   \left(\sqrt{\gamma_1}\nabla \sqrt{\gamma_2}-\sqrt{\gamma_2}\nabla \sqrt{\gamma_1} \right)\cdot \left(   \frac{1}{\sqrt{\gamma_1}\sqrt{\gamma_2}}e^{-ik\cdot x}\right)\left( \nabla\psi_1+\nabla\psi_2+\nabla\left(\psi_1\psi_2 \right)\right)   dx\right|^2  \notag\\
&=: I+II.
\end{align*}

For $I$, using Theorem \ref{w} and the definition of
$\psi_i=\sqrt{\gamma_i}\left(e^{-\frac{\phi_{is}}{2}}-\sqrt{\gamma_i}^{-1}\right)+\sqrt{\gamma_i}
e^{-\frac{\phi_{is}}{2}} w_i=:\psi^{i1}+\psi^{i2}$, we can deduce
from (\ref{psi3.6}) that
\begin{align*}
      I
      &\lesssim   \left(|k|^2+1\right)\left( \|\psi_1\|^2_{L^2(\Omega)}+  \|\psi_2\|^2_{L^2(\Omega)} + \|\psi_1\|_{L^2(\Omega)}^2\|\psi_2\|^2_{L^2(\Omega)} \right)\notag\\
         &\lesssim   \left(|k|^2+1\right)\left(s^{-2-2\sigma}+   s^{-1} \left(\|w_1\|^2_{\dot{X}^{1/2}_{\zeta_1}} +\|w_2\|^2_{\dot{X}^{1/2}_{\zeta_2}}   \right)\right)\notag\\
&\lesssim   |k|^2\left(s^{-2-2\sigma}+   s^{-1} \left(\|w_1\|^2_{\dot{X}^{1/2}_{\zeta_1}} +\|w_2\|^2_{\dot{X}^{1/2}_{\zeta_2}}    \right)\right).
\end{align*}
To estimate $II$, we divide it into two parts.
\begin{align*}
   II &\lesssim  \left|\int_{\Omega}   \left(\sqrt{\gamma_1}\nabla \sqrt{\gamma_2}-\sqrt{\gamma_2}\nabla \sqrt{\gamma_1}\right)\cdot \left(   \frac{1}{\sqrt{\gamma_1}\sqrt{\gamma_2}}e^{-ik\cdot x}\right)\left( \nabla\psi^{11}+\nabla\psi^{21}+\nabla(\psi_1\psi_2)\right)   dx\right|^2\notag\\
&\quad + \left|\int_{\Omega}   \left(\sqrt{\gamma_1}\nabla \sqrt{\gamma_2}-\sqrt{\gamma_2}\nabla \sqrt{\gamma_1} \right)\cdot \left(   \frac{1}{\sqrt{\gamma_1}\sqrt{\gamma_2}}e^{-ik\cdot x}\right)\left( \nabla\psi^{12}+\nabla\psi^{22}\right)   dx\right|^2\notag\\
&=: J_1+J_2.
\end{align*}
For $J_1$, using Lemma \ref{pre},
\begin{align*}
     J_1 &\lesssim  s^{-2\sigma}+\|\psi_1\|_{L^2}\|\nabla\psi_2\|_{L^2}+\|\psi_2\|_{L^2}\|\nabla\psi_1\|_{L^2}\notag\\
        &\lesssim   s^{-2\sigma}+\left(s^{-1-\sigma}+s^{-\frac{1}{2}}\|w_1\|_{\dot{X}^{1/2}_{\zeta_1}}  \right)\left(s^{-\sigma}+s^{\frac{1}{2}}\|w_2\|_{\dot{X}^{1/2}_{\zeta_2}}\right) \notag\\
&\quad +\left(s^{-1-\sigma}+s^{-\frac{1}{2}}\|w_2\|_{\dot{X}^{1/2}_{\zeta_2}}  \right)\left(s^{-\sigma}+s^{\frac{1}{2}}\|w_1\|_{\dot{X}^{1/2}_{\zeta_1}}\right).
\end{align*}
To estimate $J_2$, first we have
\begin{align*}
             J_2
\lesssim     \|w_1\|^2_{L^2(\Omega)}+ \|w_2\|^2_{L^2(\Omega)}+\left|\int_{\mathbb{R}^n}\Phi_B \nabla w_1  dx \right|^2 +\left|\int_{\mathbb{R}^n} \Phi_B \nabla w_2  dx \right|^2.
\end{align*}
Note that since $\gamma_j\in H^{3/2}(\Omega)$, the function $\Phi_B$
has compact support and is in the space $H^{1/2}(\mathbb{R}^n)$.
Then $\left|\int_{\mathbb{R}^n} \Phi_B \nabla w_1  dx \right|^2
\lesssim \|\Phi_B\|^2_{H^{1/2}(\mathbb{R}^n)}
\|\Phi_Bw_1\|^2_{H^{1/2}(\mathbb{R}^n)}$. We derive
\begin{align*}
J_2 \lesssim     s^{-1}\left(\|w_1\|^2_{\dot{X}^{1/2}_{\zeta_1}} +\|w_2\|^2_{\dot{X}^{1/2}_{\zeta_2}} \right)  +\left(\|w_1\|^2_{\dot{X}^{1/2}_{\zeta_1}} +\|w_2\|^2_{\dot{X}^{1/2}_{\zeta_2}} \right)
\end{align*}
by applying $ \|w\|_{L^2(\Omega)}\lesssim
s^{-1/2}\|w\|_{\dot{X}^{1/2}_{\zeta_1}}$ from Lemma \ref{sobolev}.

Based on the argument above, we have the estimate
\begin{align}\label{fourier}
        |\mathcal{F}(q)(k)|^2& \lesssim  |k|^2\left(s^{-2-2\sigma}+   s^{-1} \left(\|w_1\|^2_{\dot{X}^{1/2}_{\zeta_1}} +\|w_2\|^2_{\dot{X}^{1/2}_{\zeta_2}}    \right)\right) + \left(\|w_1\|^2_{\dot{X}^{1/2}_{\zeta_1}} +\|w_2\|^2_{\dot{X}^{1/2}_{\zeta_2}} \right) \notag\\
&\quad+s^{-\frac{1}{2}-\sigma}\|w_j\|_{\dot{X}^{1/2}_{\zeta_j}}+\|w_1\|_{\dot{X}^{1/2}_{\zeta_1}}\|w_2\|_{\dot{X}^{1/2}_{\zeta_2}} \notag\\
&\quad+s^{-2\sigma}+s^{-1}+ \|\tilde
\Lambda_{\gamma_1}-\tilde\Lambda_{\gamma_2}\|^{2\theta}_{*}
             +e^{cs} ( \|\tilde\Lambda_{\gamma_1}-\tilde\Lambda_{\gamma_2}\|_{*} +  \|\tilde\Lambda_{\gamma_1}-\tilde\Lambda_{\gamma_2}\|^2_{*}).
\end{align}
%Note that $\|u_2\|^2_{H^1(\Omega)}\lesssim e^{cs}.$
Integrating on both sides of (\ref{fourier}), we get
\begin{align*}
        |\mathcal{F}(q)(k)|^2
&\lesssim              |k|^2\left(\lambda^{-2-2\sigma}+   \frac{1}{\lambda}\int_{S^{n-1}}\int^{2\lambda}_{\lambda}s^{-1} \left(\|w_1\|^2_{\dot{X}^{1/2}_{\zeta_1}} +\|w_2\|^2_{\dot{X}^{1/2}_{\zeta_2}}    \right)\right) dsd\eta    \notag\\ &\quad +\frac{1}{\lambda}\int_{S^{n-1}}\int^{2\lambda}_{\lambda}\left(\|w_1\|^2_{\dot{X}^{1/2}_{\zeta_1}} +\|w_2\|^2_{\dot{X}^{1/2}_{\zeta_2}} \right)dsd\eta \notag\\
&\quad + \frac{1}{\lambda}\int_{S^{n-1}} \int^{2\lambda}_{\lambda}\left(s^{-\frac{1}{2}-\sigma}\|w_j\|_{\dot{X}^{1/2}_{\zeta_j}}+\|w_1\|_{\dot{X}^{1/2}_{\zeta_1}}\|w_2\|_{\dot{X}^{1/2}_{\zeta_2}}\right)    dsd\eta\notag\\
&\quad +\lambda^{-2\sigma}+\lambda^{-1}+ \|\tilde
\Lambda_{\gamma_1}-\tilde\Lambda_{\gamma_2}\|^{2\theta}_{*}
              +e^{c\lambda}   ( \|\tilde\Lambda_{\gamma_1}-\tilde\Lambda_{\gamma_2}\|_{*} +  \|\tilde\Lambda_{\gamma_1}-\tilde\Lambda_{\gamma_2}\|^2_{*}).
\end{align*}
Applying estimate (\ref{normw})
\begin{align*}
 \frac{1}{\lambda}\int_{S^{n-1}}\int^{2\lambda}_{\lambda}  \|w\|^2_{\dot{X}^{1/2}_{\zeta}}  ds d\eta\lesssim \lambda^{-2\sigma}+\lambda^{-1},
\end{align*}
we have
\begin{align}\label{fourier1}
 |\mathcal{F}(q)(k)|^2
&\lesssim        |k|^2\left(\lambda^{-1-2\sigma}+\lambda^{-2}
\right)
+\lambda^{-2\sigma}+\lambda^{-1}\notag\\
&\quad + \|\tilde \Lambda_{\gamma_1}-\tilde\Lambda_{\gamma_2}\|^{2\theta}_{*}
              +e^{c\lambda}  ( \|\tilde\Lambda_{\gamma_1}-\tilde\Lambda_{\gamma_2}\|_{*} +  \|\tilde\Lambda_{\gamma_1}-\tilde\Lambda_{\gamma_2}\|^2_{*}).
\end{align}
Varying $\eta$ in a small conic neighborhood $U_\eta\in S^{n-1}$, we
get the estimate (\ref{fourier1}) uniformly for all $k\in E=\{k\in
\mathbb{R}^n: k\ \hbox{orthogonal to some}\ \tilde\eta\in U_\eta
\}$.

Fixed $R>0$ and $k\in \mathbb{R}^n$. Let $f(k)=\mathcal{F}( q)(Rk)$. Since $q$ is compactly supported, $\mathcal{F}( q)$ is analytic by the Paley-Wiener theorem and
\begin{align*}
     |D^\alpha f(k)|\leq \|q\|_{L^1(\Omega)}\frac{R^{|\alpha|}}{(\text{diam}(\Omega)^{-1})^{|\alpha|}}\leq C \frac{R^{|\alpha|}}{\alpha!(\text{diam}(\Omega)^{-1})^{|\alpha|}}\alpha! \leq C\frac{e^{nR}}{(\text{diam}(\Omega)^{-1})^{|\alpha|}}\alpha!
\end{align*}
for any $\alpha\in \mathbb{N}^n$. Let $D=B(0,2)$ and $\tilde E=E\cap B(0,1)$ with $M=Ce^{nR}$ and $\rho=\text{diam}(\Omega)^{-1}$.
%We calculate $r_1=\min\{r_0,d_0/2\}=\min\{2,1/2\}=1/2, r_2=\min\{d(x,\partial D), r_1\}=1/2$ for all $x\in B(0,1)$.
From Proposition \ref{Ve}, there exists $\tilde \theta\in (0,1)$ such that
\begin{align}\label{qk}
     |\mathcal{F}(q)(k)|=|f(k/R)|\leq Ce^{nR(1-\tilde\theta)}\|f\|^{\tilde\theta}_{L^\infty(\tilde E)}\leq Ce^{nR(1-\tilde\theta)}\|\mathcal{F}(q)(k)\|^{\tilde\theta}_{L^\infty(E)}
\end{align}
for all $k\in B(0,R)$.

Using (\ref{qk}), together with (\ref{fourier1}) and (\ref{F}), we get
\begin{align*}
          \|F\|^2_{H^{-1}(\Omega)}
&\lesssim       R^n e^{2nR(1-\tilde\theta)}\Big(   \lambda^{-2\sigma}+\lambda^{-1}
             +\|\tilde \Lambda_{\gamma_1}-\tilde\Lambda_{\gamma_2}\|^{2\theta}_{*}
             \\
             &\quad+e^{c\lambda}  ( \|\tilde\Lambda_{\gamma_1}-\tilde\Lambda_{\gamma_2}\|_{*}
              + \|\tilde\Lambda_{\gamma_1}-\tilde\Lambda_{\gamma_2}\|^2_{*})  \Big)^{\tilde\theta}+R^{-1}
\end{align*}
if $\lambda >R^2>1$.
Thus,
\begin{align}\label{Fest}
          \|F\|^{\frac{2}{\tilde\theta}}_{H^{-1}(\Omega)}
&\lesssim    R^{\frac{n}{\tilde\theta}} e^{2nR\frac{1-\tilde\theta}{\tilde\theta}} \lambda^{-2\beta}+R^{\frac{n}{\tilde\theta}} e^{2nR\frac{1-\tilde\theta}{\tilde\theta}}\|\tilde \Lambda_{\gamma_1}-\tilde\Lambda_{\gamma_2}\|^{2\theta}_{*}  \notag\\
     &\quad +R^{\frac{n}{\tilde\theta}} e^{2nR\frac{1-\tilde\theta}{\tilde\theta}}e^{c\lambda}  ( \|\tilde\Lambda_{\gamma_1}-\tilde\Lambda_{\gamma_2}\|_{*} +  \|\tilde\Lambda_{\gamma_1}-\tilde\Lambda_{\gamma_2}\|^2_{*})+R^{-\frac{1}{\tilde\theta}}.
\end{align}
Here we denote
\begin{align*}
    \left\{
       \begin{array}{ll}
     \beta=\sigma\ \ \ &\hbox{if $0<\sigma\leq \frac{1}{2}$},\\
     \beta=\frac{1}{2}\ \ \ \ \ &\hbox{if $\frac{1}{2}<\sigma<1$  }.
    \end{array}
     \right.
\end{align*}
Choosing
\begin{align*}
      &\lambda= \left(R^{n+1} e^{2nR(1-\tilde\theta)}\right)^{\frac{1}{2\beta\tilde\theta}}
\end{align*}
such that
\begin{align*}
      &R^{\frac{n}{\tilde\theta}}e^{2nR\frac{1-\tilde\theta}{\tilde\theta}}\lambda^{-2\beta}=R^{-\frac{1}{\tilde\theta}}, \end{align*}
the estimate (\ref{Fest}) is bounded by
\begin{align}\label{FF}
          \|F\|^{\frac{2}{\tilde\theta}}_{H^{-1}(\Omega)}
&\lesssim    R^{\frac{n}{\tilde\theta}} e^{2nR\frac{1-\tilde\theta}{\tilde\theta}}\|\tilde \Lambda_{\gamma_1}-\tilde\Lambda_{\gamma_2}\|^{2\theta}_{*}\notag \\
&\quad+ R^{\frac{n}{\tilde\theta}}  e^{2nR\frac{1-\tilde\theta}{\tilde\theta}}e^{c\lambda}  ( \|\tilde\Lambda_{\gamma_1}-\tilde\Lambda_{\gamma_2}\|_{*}
  +  \|\tilde\Lambda_{\gamma_1}-\tilde\Lambda_{\gamma_2}\|^2_{*})+R^{-\frac{1}{\tilde\theta}}.
\end{align}
Using the fact that
\begin{align*}
R^{\frac{n}{\tilde\theta}} e^{2nR\frac{1-\tilde\theta}{\tilde\theta}+c\lambda}
            &=R^{\frac{n}{\tilde\theta}} e^{2nR\frac{1-\tilde\theta}{\tilde\theta}+c   \left(R^{n+1} e^{2nR(1-\tilde\theta)}\right)^{\frac{1}{2\beta\tilde\theta}}  } \\
&\leq \exp\left(e^{\left[\frac{n}{\tilde\theta}+2n\frac{1-\tilde\theta}{\tilde\theta}+c+\frac{n+1}{2\beta\tilde\theta}+\frac{n(1-\tilde\theta)}{\beta\tilde\theta}\right]R}\right)\ \ \ \hbox{for all $R>0$}.
\end{align*}
Setting
$K=\frac{n}{\tilde\theta}+2n\frac{1-\tilde\theta}{\tilde\theta}+c+\frac{n+1}{2\beta\tilde\theta}+\frac{n(1-\tilde\theta)}{\beta\tilde\theta}$,
(\ref{FF}) leads to
\begin{align}
          \|F\|^{\frac{2}{\tilde\theta}}_{H^{-1}(\Omega)}
\lesssim    e^{e^{KR}}\left( \|\tilde \Lambda_{\gamma_1}-\tilde\Lambda_{\gamma_2}\|^{2\theta}_{*}+ \|\tilde\Lambda_{\gamma_1}-\tilde\Lambda_{\gamma_2}\|_{*}+\|\tilde\Lambda_{\gamma_1}-\tilde\Lambda_{\gamma_2}\|^2_{*}\right)  +R^{-\frac{1}{\tilde\theta}}.
\end{align}

The arguments above are valid if $\lambda\geq \lambda_0$. There
exists a small $\delta$ such that if
$\|\tilde\Lambda_{\gamma_1}-\tilde\Lambda_{\gamma_2}\|_{*}<\delta$
and $R=\frac{1}{K}\log{|\log
\|\tilde\Lambda_{\gamma_1}-\tilde\Lambda_{\gamma_2}\|_{*}^\theta|}$,
we have $\lambda\geq \lambda_0$. To be more precise, if
$$
\lambda_0\leq \lambda=\left(R^{n+1} e^{2nR(1-\tilde\theta)}\right)^{\frac{1}{2\beta\tilde\theta}}\leq \left(e^{R(n+1)} e^{2nR(1-\tilde\theta)}\right)^{\frac{1}{2\beta\tilde\theta}},
$$
then
$$
   R\geq \frac{2\beta\tilde\theta}{3n+1-2n\tilde \theta}\log{\lambda_0}=:R_0.
$$
We take $0<\delta\leq \delta_0<1$ with
$
    \delta_0^\theta\leq e^{-e^{K\exp{R_0}}}.
$ Thus
\begin{align}\label{4.8}
          \|F\|_{H^{-1}(\Omega)}
\lesssim    \left( \|\tilde \Lambda_{\gamma_1}-\tilde\Lambda_{\gamma_2}\|^{\theta}_{*}+ \|\tilde \Lambda_{\gamma_1}-\tilde\Lambda_{\gamma_2}\|^{1-\theta}_{*} +\frac{1}{K}\log{|\log \|\tilde\Lambda_{\gamma_1}-\tilde\Lambda_{\gamma_2}\|_{*}^\theta|}^{-\frac{1}{\tilde\theta}} \right)^{\frac{\tilde\theta}{2}}.
\end{align}
For any $f\in L^\infty(\mathbb{R}^n)$ and $0<\tilde\sigma<1$, we deduce that
$$
    |f(x)|^{\frac{n}{1-\tilde\sigma}}\leq \|f\|_{L^\infty(\mathbb{R}^n)}^{\frac{n}{1-\tilde\sigma}-2} |f(x)|^2
$$
for almost every $x\in \mathbb{R}^n$. Then we have
\begin{align}\label{C1}
    \|\gamma_1-\gamma_2\|_{W^{1,\frac{n}{1-\tilde\sigma}}(\Omega)}\lesssim \|\gamma_1-\gamma_2\|_{H^1(\Omega)}^{\frac{2(1-\tilde\sigma)}{n}}.
\end{align}
From Theorem 5 in Ch. 5 in \cite{evan}, we obtain that
\begin{align}\label{C0}
\|\gamma_1-\gamma_2\|_{C^{0,\tilde\sigma}(\overline\Omega)}\lesssim \|\gamma_1-\gamma_2\|_{W^{1,\frac{n}{1-\tilde\sigma}}(\Omega)}.
\end{align}
%Thus (\ref{C1}) and (\ref{C0}) implies that
%\begin{align}\label{C}
%\|\gamma_1-\gamma_2\|_{C^{0,\tilde\sigma}(B)}
%\lesssim  \|\log\sqrt{\gamma_1}-\log\sqrt{\gamma_2}\|_{H^1(B)}^{\frac{2(1-\tilde\sigma)}{n}} .
%\end{align}
Applying (\ref{v}), (\ref{4.8}), (\ref{C1}) and (\ref{C0}), the
estimate
\begin{align*}
   \|\gamma_1-\gamma_2\|_{C^{0,\tilde\sigma}(\overline\Omega)}
&\lesssim  \left( \|\tilde \Lambda_{\gamma_1}-\tilde\Lambda_{\gamma_2}\|^{\theta}_{*}+ \|\tilde \Lambda_{\gamma_1}-\tilde\Lambda_{\gamma_2}\|^{1-\theta}_{*} +\frac{1}{K}\log{|\log \|\tilde\Lambda_{\gamma_1}-\tilde\Lambda_{\gamma_2}\|_{*}^\theta|}^{-\frac{1}{\tilde\theta}} \right)^{\frac{\tilde\theta(1-\tilde\sigma)}{n}}
\end{align*}
holds.

Now if
$\|\tilde\Lambda_{\gamma_1}-\tilde\Lambda_{\gamma_2}\|_{*}\geq\delta>0$,
then we have
\begin{align}\label{4.9}
          \|\gamma_1-\gamma_2\|_{C^{0,\tilde\sigma}(\overline\Omega)}
\leq  \frac{C}{\delta^{\frac{\theta\tilde\theta(1-\tilde\sigma)}{n}}}   \delta^{\frac{\theta\tilde\theta(1-\tilde\sigma)}{n}}
\lesssim \|\tilde \Lambda_{\gamma_1}-\tilde\Lambda_{\gamma_2}\|^{\frac{\theta\tilde\theta(1-\tilde\sigma)}{n}}_{*}
\end{align}
for some $C>0$. The proof of Theorem 1.1 is completed.

\bigskip

\textbf{Acknowlegments.}
The author would like to thank professor Gunther Uhlmann for his encouragements and helpful discussions. The author also would like to thank the anonymous referee for his or her comments which have contributed to improve this manuscript. The author is partially supported by NSF.

%\bigskip

%\textbf{Acknowlegments}\\

%The author would like to thank professor Gunther Uhlmann for his encouragements and helpful discussions.

\end{document}